\title{Distributed Reconfiguration of Maximal Independent Sets}
\author[1]{Keren Censor-Hillel}
\author[2]{Mikael Rabie}
\affil[1]{Department of Computer Science, Technion, Israel, ckeren@cs.technion.ac.il}
\affil[2]{IRIF, Universit\'e Paris Diderot, France, mikael.rabie@irif.fr}
\newtheorem{theorem}{Theorem}
\newtheorem{corollary}{Corollary}
\newtheorem{lemma}{Lemma}
\newtheorem{definition}{Definition}
\newenvironment{theorem-repeat}[1]{\begin{trivlist}
		\item[\hspace{\labelsep}{\bf\noindent Theorem \ref{#1}. }]\em }%
	{\end{trivlist}}
\newenvironment{lemma-repeat}[1]{\begin{trivlist}
	\item[\hspace{\labelsep}{\bf\noindent Lemma \ref{#1}. }]\em }%
{\end{trivlist}}
\begin{document}

\maketitle
\setcounter{footnote}{0}

\begin{abstract}
In this paper, we investigate a \emph{distributed maximal independent set (MIS) reconfiguration problem}, in which there are two maximal independent sets for which every node is given its membership status, and the nodes need to communicate with their neighbors in order to find a \emph{reconfiguration schedule} that switches from the first MIS to the second. Such a schedule is a list of \emph{independent sets} that is restricted by forbidding two neighbors to change their membership status at the same step. In addition, these independent sets should provide some covering guarantee.

We show that obtaining an actual MIS (and even a 3-dominating set) in each intermediate step is impossible. However, we provide efficient solutions when the intermediate sets are only required to be independent and 4-dominating, which is almost always possible, as we fully characterize.

Consequently, our goal is to pin down the tradeoff between the possible length of the schedule and the number of communication rounds.
We prove that a constant length schedule can be found in $O(\texttt{MIS}+\texttt{R32})$ rounds, where $\texttt{MIS}$ is the complexity of finding an MIS in a worst-case graph and $\texttt{R32}$ is the complexity of finding a $(3,2)$-ruling set. For bounded degree graphs, this is $O(\log^*n)$ rounds and we show that it is necessary. On the other extreme, we show that with a constant number of rounds we can find a linear length schedule.
\end{abstract}

\section{Introduction}
Consider a distributed setting in which each node of a network receives an input from a higher-level application which tells it whether it is \emph{selected} or not, such that the set of selected nodes is a maximal independent set (MIS), which we will denote by $\alpha$. The reason that the application requires an MIS is because it needs the set of selected nodes to dominate all nodes for the sake of, say, monitoring the network, but without having violations of two neighbors being in the set, because they may cause conflicting actions. Now, because of changes in the network traffic, the energy consumption, or any one of various conditions that may change, the application needs to change the selected set of nodes. Once a new input MIS, denoted by $\beta$, is given to the nodes by the application, the nodes need to \emph{reconfigure} their states to that set while never sacrificing the safety condition of independence. In fact, for compatibility reasons, neighboring nodes cannot change their membership in the set at the same time, so a \emph{sequence} of changes is needed for converging into the new MIS. We call such a sequence a \emph{reconfiguration schedule}.

The length of the schedule is clearly a measure that is required to be minimized. Hence, an extreme solution would be to have all nodes declare themselves as unselected, and then the new set of nodes declare that they are selected. However, this very fast approach suffers from loosing the domination property throughout the reconfiguration schedule. Thus, the structure of the network must be taken into account, but since the topology is unknown, finding a schedule that maintains a good covering at all times necessitates that the nodes communicate. This brings another measure of complexity into question, which is the number of communication rounds that are needed in order to find a short schedule. Our goal in this work is to study the tradeoff between the possible length of the schedule and the number of communication rounds needed for finding it.

Unfortunately, as we show, it is not always possible to find schedules where each set is an MIS. This impossibility holds even if we relax the condition of domination and require only independent 3-dominating sets. Even when 3-domination \emph{is} possible, it may be extremely inefficient (Section~\ref{sec:impossibilities}).
\newcommand{\PropThreeDom}
{
Requiring $3$-domination for intermediate steps is costly:
\begin{enumerate}
\item There exists a class of inputs $G=(V,E)$ with two MIS $\alpha$ and $\beta$ such that there is no reconfiguration schedule with 3-dominating intermediate steps.
\item There exists a class of inputs $G=(V,E)$ with two MIS $\alpha$ and $\beta$ such that any reconfiguration schedule is of length $\Omega(n)$ and needs $\Theta(n)$ rounds to be found, if intermediate steps must be 3-dominating.
\end{enumerate}
}
\begin{theorem}
\label{prop:3dom}
\PropThreeDom
\end{theorem}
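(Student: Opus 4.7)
I prove each part with an explicit construction.

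For part (1), the plan is to take $G = P_5$ on vertices $v_1,\ldots,v_5$ with $\alpha = \{v_1,v_3,v_5\}$ and $\beta = \{v_2,v_4\}$ (both readily checked to be MIS). Since every vertex outside $\alpha$ has a neighbor in $\alpha$, the first step of any schedule consists solely of removals from $\alpha$. Among subsets $R\subseteq\alpha$ whose removal preserves $3$-domination, only $R \in \{\{v_1\},\{v_3\},\{v_5\},\{v_1,v_5\}\}$ qualify: removing $\{v_1,v_3\}$, $\{v_3,v_5\}$, or all of $\alpha$ leaves a singleton or empty set in which $v_1$ or $v_5$ lies at distance $4$ from any remaining member. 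The intermediate sets reachable after step~$1$ are thus $\{v_1,v_3\}, \{v_1,v_5\}, \{v_3,v_5\}, \{v_3\}$. To add $v_2$ (resp.\ $v_4$), independence forces $v_1,v_3$ (resp.\ $v_3,v_5$) to be absent from the pre-step set, so the only candidate predecessors are $\{v_5\}$ or $\{v_1\}$, both of which fail $3$-domination. An induction on the step confirms that every state reachable from $\alpha$ is contained in $\{v_1,v_3,v_5\}$, so $\beta$ is unreachable.

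For part (2), the plan is to build a family $\{G_n\}$ on $\Theta(n)$ vertices where some $3$-dominating schedule exists but every such schedule has length $\Omega(n)$ and requires $\Omega(n)$ rounds to compute. I envisage a long ``spine'' of $\Theta(n)$ vertices augmented with small auxiliary gadgets attached at regular positions, together with MIS $\alpha,\beta$ differing at $\Omega(n)$ spine positions. The gadgets are engineered so that each intermediate set must $3$-dominate an auxiliary vertex that is only covered through a narrow window of the spine; any simultaneous modification of two non-adjacent spine positions then leaves one such auxiliary vertex at distance $>3$ from the new set. This forces each step to modify only $O(1)$ spine positions, yielding a length lower bound of $\Omega(n)$.

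For the round lower bound, I will use an indistinguishability argument between two members of the family that agree except near one endpoint. A vertex near the middle of the spine has identical $o(n)$-neighborhoods in the two instances, but the correct schedule at this vertex (in particular, the direction in which the reconfiguration wave must propagate) is determined by the distant endpoint; hence no $o(n)$-round algorithm can produce the correct output on both. The matching $O(n)$ upper bound is realized by a simple sequential wave algorithm, giving $\Theta(n)$ rounds. The main obstacle I foresee is engineering the auxiliary gadget precisely so that $3$-domination is genuinely feasible (unlike the $P_5$ example of part~(1)) while still ruling out any parallel progress, which I expect to require a careful local case analysis of intermediate sets inside a single gadget window, paralleling the analysis of part~(1).
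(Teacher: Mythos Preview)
Your proof of part~(1) via $P_5$ is correct and is arguably cleaner than the paper's construction: the paper uses a larger graph (shown in a figure) in which each $\beta$-node has a private pendant $\alpha$-neighbor that becomes uncovered once the $\beta$-node's other $\alpha$-neighbors are removed; your path achieves the same obstruction with five vertices, the diameter-$4$ endpoints playing the role of the ``uncovered'' witnesses. The underlying mechanism---adding any $\beta$-vertex forces a predecessor that already fails $3$-domination---is identical, and your induction that every reachable state stays inside $\alpha$ is airtight.

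For part~(2), however, what you have is a plan rather than a proof. You correctly diagnose that you need a long structure where a $3$-dominating schedule \emph{does} exist but only via a one-directional wave, and you correctly note that your $P_5$ example does not extend because it admits no schedule at all. But you stop at ``I envisage a spine with auxiliary gadgets'' and explicitly flag the gadget design as the outstanding obstacle. That is precisely the content of the proof. The paper's solution is to chain copies of its part-(1) gadget along a path so that every internal $\beta$-node faces the same obstruction as in part~(1), while exactly one endpoint $\beta$-node lacks the obstructing pendant and can therefore be switched first; from there the reconfiguration can only advance one position at a time. Until you exhibit such a gadget and verify both (a) that the endpoint can start and (b) that no two non-adjacent positions can progress in the same step, part~(2) remains open. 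Your indistinguishability outline for the $\Omega(n)$-round lower bound is reasonable---indeed more careful than the paper's one-line ``clearly requires a linear number of rounds''---but it too is contingent on first having the concrete family in hand.
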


However, we prove that independence and 4-domination can indeed be obtained. Our main result is the following (Section~\ref{sec:const-len}).
\begin{theorem}\label{th:main}\textbf{(informal)}
For any graph $G=(V,E)$ of diameter greater than 3 and any input of two MIS $\alpha,\beta$, there exists a reconfiguration schedule of constant length $28$, with independent 4-dominating intermediate steps. Moreover, such a schedule can be found in $O(\texttt{MIS}+\texttt{R32})$ rounds, where $\texttt{MIS}$ is the complexity of finding an MIS on a worst-case graph and $\texttt{R32}$ is the complexity of finding a $(3,2)$-ruling set on a worst-case graph.
\end{theorem}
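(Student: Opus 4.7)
The plan is to fix a constant-length reconfiguration schedule that every node can execute once it has computed its \emph{role} via a few standard symmetry-breaking primitives. First I classify each node by its membership pair: persistent selected $S = \alpha \cap \beta$, persistent unselected $T = V \setminus (\alpha \cup \beta)$, leavers $A = \alpha \setminus \beta$, and joiners $B = \beta \setminus \alpha$. Because both $\alpha$ and $\beta$ are maximal independent, every $a \in A$ has a $\beta$-dominator that must lie in $B$ (an $S$-dominator would violate independence of $\alpha$), and symmetrically every $b \in B$ has a neighbor in $A$. Since $A \subseteq \alpha$ and $B \subseteq \beta$, each of $A, B$ is independent, so all edges inside $A \cup B$ are bipartite between $A$ and $B$. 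This bipartite structure is what the schedule operates on.

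Next, in a pre-processing stage, I would compute two auxiliary structures in parallel: an MIS $M$ on an auxiliary graph built from the leavers (e.g.\ edges between $A$-nodes at distance $\leq 2$ in $G$) and a $(3,2)$-ruling set $R$ of $B$. These are obtained by direct invocation of the corresponding primitives in $O(\texttt{MIS})$ and $O(\texttt{R32})$ rounds respectively, for a total of $O(\texttt{MIS} + \texttt{R32})$. The elements of $R$ act as widely-spaced ``leaders'' among the joiners whose closed $3$-neighborhoods cover all of $B$; $M$ plays the analogous role on the leavers' side, so that every node in $A \cup B$ knows a unique leader of bounded distance under whose supervision it will toggle.

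I would then execute a fixed $28$-round schedule in which every node's action at each round is a deterministic function of its type in $\{S,T,A,B\}$ together with its distance and role relative to its leader in $M$ or $R$. The schedule proceeds in waves that interleave controlled de-selections of $A$-nodes with controlled activations of $B$-nodes, in order of increasing ``rank'' inside each leader's ball. Because leaders in $M$ and in $R$ are pairwise non-adjacent and every other node toggles only in the shadow of a unique leader, two adjacent nodes never toggle simultaneously, and independence is preserved at every step. After the 28 steps every leaver has executed its single deselection and every joiner its single activation, so the resulting set is exactly $\beta$.

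The main obstacle will be verifying the $4$-domination invariant across all $28$ intermediate configurations, and in particular at the critical moment when the last $\alpha$-witness of some node $v$ is removed before a new $\beta$-witness of $v$ has been turned on. Here I would rely on the two structural facts above, namely that every leaver has a joiner neighbor and vice versa, together with the covering radius of the $(3,2)$-ruling set: these guarantee that either an element of $R$, a not-yet-deselected leaver, or a node of $S$ lies within distance $4$ of every $v$ throughout the schedule. The diameter-$>3$ hypothesis is needed precisely to rule out the small pathological graphs of Theorem~\ref{prop:3dom} where no such slack is available. Verifying 4-domination at each of the 28 steps reduces to a finite case analysis on a leader's bounded-radius ball, which I expect to be the bulk of the technical work; the round complexity then follows immediately, since the schedule itself runs in $28 = O(1)$ synchronous rounds on top of the $O(\texttt{MIS} + \texttt{R32})$ pre-processing.
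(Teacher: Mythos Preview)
Your high-level picture---classify nodes into $S,T,A,B$, compute an MIS on the $A$-side and a $(3,2)$-ruling set on the $B$-side, and then run distance-to-leader waves---captures part of the paper's strategy (it is close in spirit to how the paper handles a single large connected component of $\alpha\cup\beta$ in Lemma~\ref{lemma:no-epsilon-nodes}). But the proposal has a genuine gap: it never decomposes $\alpha\cup\beta$ into connected components, and that decomposition is where the real difficulty lies.

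Concretely, consider an \emph{isolated} component $V_i$ of $\alpha\cup\beta$ whose diameter inside $\alpha\cup\beta$ is at most~$2$. In such a component every $A$-node is adjacent to every $B$-node (complete bipartite), and by isolation every $\epsilon$-neighbour of $V_i$ has all of its $\alpha$- and $\beta$-neighbours inside $V_i$. Your schedule only toggles $A$- and $B$-nodes, so there is necessarily a step at which all $A$-nodes of $V_i$ have been removed and no $B$-node of $V_i$ has yet been added. At that moment nothing inside $V_i$ is selected, there are no $S$-nodes adjacent to $V_i$, and your leaders in $M$ or $R$ from other components need not be within distance~$4$ of $V_i$ at all---so $4$-domination fails. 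The ``finite case analysis on a leader's bounded-radius ball'' cannot rescue this, because the problem is not local to a leader's ball; it is that some components have no usable leader. The paper resolves this in Lemma~\ref{lemma:isolated} by two devices your outline is missing: (i) an MIS taken at the level of \emph{components} (not of individual $A$-nodes) so that neighbouring isolated components cover one another through $\epsilon$--$\epsilon$ edges, and (ii) for the diameter-$3$ isolated case, temporarily inserting a carefully chosen $\epsilon$-node into the independent set---something your schedule never does.

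Two smaller points. First, your justification that ``two adjacent nodes never toggle simultaneously'' because ``leaders in $M$ and in $R$ are pairwise non-adjacent'' does not follow: an $M$-leader lies in $A$ and an $R$-leader lies in $B$, and there is no reason these are non-adjacent. Second, the diameter-$>3$ hypothesis is not there to avoid the $3$-domination obstructions of Theorem~\ref{prop:3dom}; it is needed because for diameters $\le 3$ there exist instances (characterised in Theorem~\ref{thm:characterization}) admitting no $4$-dominating schedule whatsoever.
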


Obtaining the above theorem turns out to be an involved task. Our key ingredients are the following. We prove that graphs with a not-too-small diameter always admit a schedule of reconfiguration steps from a given maximal independent set to another. Moreover, full knowledge of the topology of the graph is not necessary in order to be able to locally add an element to the set after having removed its neighbors (to avoid dependence). Rather, only local manipulations are needed for doing so.

The currently known complexities that give $O(\texttt{MIS}+\texttt{R32})$ are discussed in the related work part. Here, we draw attention to
the fact that an immediate corollary of Theorem~\ref{th:main} is that for graphs of bounded degree we can compute the constant length schedule within $O(\log^*n)$ rounds. Further, we show that this is a lower bound by reducing the problem of finding an MIS on a path to obtaining a constant-length schedule for MIS reconfiguration. The following theorem actually holds even if one requires only $d$-domination, for some constant $d\geq 4$ (Section~\ref{sec:impossibilities}).

\newcommand{\Proplogstar}
{
For any fixed $k$, there exists a class of $k$-regular inputs $G=(V,E)$ with two MIS $\alpha$ and $\beta$ such that any reconfiguration schedule of constant length with 4-domination needs $\Theta(\log^*n)$ rounds to be found.
}
\begin{theorem}\label{prop:log-star}
\Proplogstar
\end{theorem}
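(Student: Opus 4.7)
My plan is to establish both directions of the $\Theta(\log^* n)$ bound. The upper bound is immediate from Theorem~\ref{th:main}: on $k$-regular, hence bounded-degree, graphs the complexities $\texttt{MIS}$ and $\texttt{R32}$ are both $O(\log^* n)$, so the constant-length schedule promised there is found in $O(\log^* n)$ rounds. The substance of the theorem is therefore the matching lower bound, which I would prove by reducing from the classical $\Omega(\log^* n)$ lower bound of Linial (and its randomized counterpart by Naor) for computing an MIS on an $n$-vertex cycle.

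Given an input cycle $C$ on $n$ vertices, I would construct a $k$-regular graph $G$ by attaching to every $v_i \in V(C)$ a copy of a small rigid gadget whose purpose is (a)~to raise the degree of $v_i$ to~$k$, and (b)~to fix, in every MIS of $G$, the membership of the gadget-vertices once that of $v_i$ is decided. I would then take two MIS $\alpha, \beta$ of $G$ whose projections onto $C$ are the two natural ``shifted'' MIS of a cycle whose length is a multiple of an appropriate constant (so that both projections are genuine MIS of $C$), and whose extensions to the gadgets are the forced ones.

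Next, assume an algorithm $\mathcal{A}$ produces an independent $4$-dominating reconfiguration schedule $\alpha = I_0, I_1, \ldots, I_L = \beta$ of constant length $L$ in $T(n)$ rounds. After executing $\mathcal{A}$, every node knows its $O(1)$-bit membership vector across the $L+1$ intermediate sets. From these labels I would extract, in $O(1)$ further rounds, an MIS of $C$ (or of a constant contraction of it): at each step $I_t$ the cycle-vertices belonging to $I_t$ form an independent subset of $C$ by independence of $I_t$, and the $4$-domination property, combined with the fact that only a bounded-radius neighborhood of each vertex can switch throughout a schedule of constant length, forces the subset to be \emph{maximal} on $C$ at some step $t^*$. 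Composing $\mathcal{A}$ with this $O(1)$-round post-processing would give an MIS algorithm on $C$ running in $T(n) + O(1)$ rounds, whence $T(n) = \Omega(\log^* n)$.

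The main obstacle I anticipate is engineering the gadget so that (i)~both $\alpha$ and $\beta$ are genuine MIS of the $k$-regular graph $G$, (ii)~the gadget cannot be used by $\mathcal{A}$ as extra ``freedom'' that bypasses the symmetry-breaking on $C$, and (iii)~the reduction extends uniformly to every constant $d \geq 4$, as mentioned in the preceding paragraph. The most delicate point is ensuring that at least one intermediate set $I_t$ projects to a \emph{maximal} independent set of $C$ and not merely to an independent $4$-dominating set; the latter is solvable in $O(1)$ rounds on a cycle and would not inherit the $\log^*$ barrier. Ruling out this loophole is where the constancy of $L$, together with the rigidity of the gadget, must be used in an essential way.
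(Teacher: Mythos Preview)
Your high-level strategy---reduce from Linial's lower bound for MIS on a path/cycle by locally embedding the path into a $k$-regular gadget graph---matches the paper's. But the reduction as you sketch it is circular, and the extraction step you propose is not the one that works.

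\textbf{The construction step is circular.} You want $\alpha$ and $\beta$ to project onto $C$ as the two ``shifted'' MIS of the cycle. But producing those two sets already requires a $2$-colouring of $C$, which itself needs $\Omega(\log^* n)$ rounds. In a reduction, a node of the path must be able to build its piece of $(G,\alpha,\beta)$ from its $O(1)$-neighbourhood in $P$ alone; your $\alpha,\beta$ cannot be written down that way. Worse, once $\alpha\cap C$ is part of the input, every node already \emph{has} an MIS of $C$, so nothing the schedule algorithm does can be leveraged to break symmetry---the reduction proves nothing. The paper avoids this by making the path vertices $\epsilon$-nodes: each path vertex is replaced by a fixed local gadget containing one designated $\alpha$-node and one designated $\beta$-node, so $(G,\alpha,\beta)$ is determined purely locally and carries no symmetry-breaking information about $P$.

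\textbf{The extraction step is different.} You hope that some single intermediate set $I_{t^*}$ projects to a \emph{maximal} independent set of $C$; you flag this as the delicate point and leave it open. In fact there is no reason such a $t^*$ exists, and the paper does not look for one. Instead it uses the whole schedule as \emph{timing information}: each path vertex records the first step at which its gadget's $\beta$-node enters the set, giving a $K$-colouring of the path (where $K$ is the schedule length). The $4$-domination constraint forces that no three consecutive gadgets can have the same colour, and from a constant-palette colouring with this ``no three in a row'' property one extracts an MIS of the path in $O(K)=O(1)$ further rounds via local minima. This is the missing idea.

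(A side remark: your claim that an independent $4$-dominating set on a cycle is solvable in $O(1)$ rounds is incorrect---Linial's bound applies to any $(2,O(1))$-ruling set---but fixing that misconception does not rescue the argument, for the reasons above.)
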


If one wants to optimize the communication cost of finding a schedule rather than its length, we show that a (rather lengthy) schedule can be obtained within $O(1)$ rounds (Section~\ref{sec:const-round}).

\begin{theorem}\label{th:conscomm}\textbf{(informal)}
For any graph $G=(V,E)$ and any input of maximal independent sets $\alpha,\beta$ to the MIS-reconfiguration problem, there exists a reconfiguration schedule of length $\Theta(f(n))$, where $f(n)$ is the largest identifier among the nodes in the graph, which can be found in $O(1)$ rounds.
\end{theorem}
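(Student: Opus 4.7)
The plan is to serialize all membership changes strictly according to node identifiers, so that the $\Theta(f(n))$ schedule length absorbs the sequencing and each node can compute its role from a $O(1)$-radius view of the graph. In two rounds of communication each node gathers the identifiers and $\alpha,\beta$-memberships of its $2$-hop neighborhood; with this local view every node independently determines when it will act in the schedule, without any further communication.

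Let $A=\alpha\setminus\beta$ and $B=\beta\setminus\alpha$, and consider the bipartite auxiliary graph $H$ on $A\cup B$ inheriting the edges of $G$; this captures all neighbor relations between ``changing'' nodes since both $\alpha$ and $\beta$ are independent. A short argument shows that every $u\in A$ has at least one $B$-neighbor, since otherwise $u$'s required $\beta$-dominator would lie in $\alpha\cap\beta$, contradicting the independence of $\alpha$; symmetrically every $v\in B$ has an $A$-neighbor. Define the schedule of length $2f(n)$ by setting, for each $v\in B$, its join step $s_v = 2\,\mathrm{id}(v)$, and for each $u\in A$, its removal step $s_u = 2\,\mathrm{id}(v^\ast(u))-1$, where $v^\ast(u)$ is the $B$-neighbor of $u$ with smallest identifier. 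Odd steps contain only removals and even steps only additions, and within each parity the simultaneously acting nodes are pairwise non-adjacent: distinct $v$'s have distinct steps by uniqueness of identifiers, while multiple $u$'s can share a common $v^\ast$ and hence a common step, but they all lie in $A\subseteq\alpha$ which is independent. For a cross pair $u\sim v$ one has $\mathrm{id}(v^\ast(u))\le\mathrm{id}(v)$ and hence $s_u\le s_v-1$, so $u$ has left the set by the time $v$ joins. Independence of each intermediate set follows, since when $v$ joins, all candidate neighbors in the set lie in $A$ (an $\alpha\cap\beta$-neighbor of $v$ is ruled out by the independence of $\beta$) and have already departed.

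The main obstacle is verifying the $4$-domination property of each intermediate set, which is what prevents a purely trivial construction. For a node $w$ and a step $t$, one must exhibit a set member within distance four. The easy cases are when $w$ has a dominator in $\alpha\cap\beta$ (always in $S_t$), or when some $\alpha$-dominator of $w$ still lies in $S_t$. Otherwise every $\alpha$-dominator of $w$ has already left; pick one such $u$, so $t\ge s_u$ and $v^\ast(u)$ is scheduled to join at step $s_u+1$. For $t\ge s_u+1$, the member $v^\ast(u)\in S_t$ provides distance-$2$ domination of $w$ via $u$. The delicate case — and the one I expect to be the main hurdle — is the one-step gap $t=s_u$, where $u$ has just left and $v^\ast(u)$ has not yet joined. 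Here a case analysis along short paths in $H$ is needed: either another $A$-neighbor of $v^\ast(u)$ is still present (dominating $w$ within distance three), or, by using the $B$-partner of an $A$-node close to $w$ that has already joined in an earlier even step, one exhibits a set member within distance four. The analysis crucially exploits the absence of isolated vertices in $H$, which prevents a local neighborhood from being stripped of set members during a gap step.
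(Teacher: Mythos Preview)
Your construction does not maintain $4$-domination in general; the ``delicate case'' you flag is in fact fatal, not merely delicate. Here is a seven-vertex counterexample. Take $\alpha=\{u_0,u_1,u_2\}$, $\beta=\{v_0,v_1,v_2\}$, one $\epsilon$-node $w$, and edges
\[
\{w,u_0\},\ \{w,v_0\},\ \{v_0,u_1\},\ \{u_0,v_1\},\ \{u_1,v_1\},\ \{u_2,v_1\},\ \{u_2,v_2\}.
\]
Both $\alpha$ and $\beta$ are maximal independent sets and the graph has diameter $4$. Assign identifiers so that $v_1$ has the smallest identifier among the $\beta$-nodes. The $B$-neighbourhoods are $N_B(u_0)=\{v_1\}$, $N_B(u_1)=\{v_0,v_1\}$, $N_B(u_2)=\{v_1,v_2\}$, so $v^\ast(u_i)=v_1$ for all $i$ and $s_{u_0}=s_{u_1}=s_{u_2}=2\,\mathrm{id}(v_1)-1$. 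At that step all of $\alpha$ is removed while no $\beta$-node has yet joined; the intermediate set is empty and hence not $4$-dominating. Neither of your two proposed escape routes applies: every $A$-neighbour of $v^\ast=v_1$ departs simultaneously, and there is no $B$-node that joined in an earlier even step. The phenomenon is generic: whenever a $\beta$-node $v$ is the minimum-identifier $B$-neighbour of \emph{all} its $A$-neighbours, your rule removes the entire $N_\alpha(v)$ in one shot with nothing planted to cover that neighbourhood.

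The paper's approach is different in a way that directly targets this failure. It first proves Lemma~\ref{lemma:constant-rounds}: for any single $v\in\beta\setminus\alpha$ one can move from the current MIS to a new MIS containing $v$ and all of $\alpha\cap\beta$ via a length-$6$ schedule computable from the $5$-hop view of $v$. The six steps are not ``remove $N_\alpha(v)$, then add $v$'': before stripping $N_\alpha(v)$, the procedure locates and temporarily inserts an auxiliary node at distance $2$ or $3$ from $v$ (a short case analysis shows one always exists that keeps every vertex $4$-dominated), and afterwards restores the displaced $\alpha$-nodes outside $N(v)$. Theorem~\ref{th:conscomm} then follows by allotting the node with identifier $k$ the slots $[6k+1,6(k+1)]$ and applying the lemma sequentially. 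Your two-steps-per-$\beta$-node schedule has no mechanism to plant such a temporary anchor, which is precisely why it can empty a neighbourhood.
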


The construction generalizes itself on graphs with a distance-$k$ coloring of $c$ colors, with $k$ big enough. It is possible, from this coloring, to compute a schedule of length $O(c)$ after a constant number of communication rounds. Let $\Delta$ be the maximal degree of the graph. A distance-$k$ $O(\Delta^{2k})$ coloring can be found in $O(\log^*n)$ rounds~\cite{Linial:1987:DGA:1382440.1382990}, and a distance-$k$ $O(\Delta^{k})$ coloring can be found in $O(\log^*n+\sqrt{\Delta^k})$ rounds~\cite{BarenboimEG18}. Hence, with the same respective communication complexities, we can find schedules of lengths $O(\Delta^{2k})$ and $O(\Delta^k)$.

Finally, as can be inferred from Theorem~\ref{th:main}, $4$-domination suffices for any graph with diameter greater than $3$. For graphs with small diameter, we give an exact characterization of the conditions that allow the existence of a reconfiguration schedule (Section~\ref{sec:characterization}). This result implies that our algorithm from Theorem~\ref{th:main}, combined with a trivial algorithm that collects the entire graph when the diameter is a small constant, produces an efficient reconfiguration schedule in all cases for which it exists.

\subsection{Related work}

\subparagraph{Distributed Reconfiguration.}
Questions of distributed reconfiguration were actually not studied before 2018. Then, Bonamy et al.~\cite{bonamy2018distributed} considered distributed reconfiguration of colorings, with the goal of finding which length of schedule can be computed within a given number of communication rounds. The problem being PSPACE complete in the general case, several subcases were explored.  Since finding looser restrictions for the transitions is important for making the problem local instead of having to solve a global PSPACE hard problem, the addition of extra colors in the intermediate colorings was allowed. This aided either having a solution, or finding one quickly.

\subparagraph{Distributed Constructions.}
Our constructions sometimes make use of two fundamental subroutines, which find an MIS or a $(3,2)$-ruling set in a graph. An $(x,y)$-ruling set is a set $S \subseteq V$ in which every two nodes are at distance at least $x$, and every node that is not in $S$ is within distance at most $y$ from $S$. Thus, an MIS is a $(2,1)$-ruling set. Finding an MIS is one of the most fundamental problems in distributed computing. The celebrated randomized $O(\log n)$-round algorithms of Luby~\cite{Luby86} and Alon et al.~\cite{AlonBI86} have been recently improved by Ghaffari to $O(\log\Delta+2^{O(\sqrt{\log\log n})})$ rounds, where $\Delta$ is the maximal degree in the network~\cite{Ghaffari16}. Deterministic solutions are the classic network-decomposition based algorithm of Panconesi and Srinivasan that runs in $2^{O(\sqrt{\log n})}$ rounds~\cite{PanconesiS96}, and the $O(\Delta+\log^*n)$-round algorithm of Barenboim et al.~\cite{BarenboimEK14}. The classic lower bound of Linial~\cite{Linial92} shows that $\Omega(\log^*n)$ rounds are necessary, Kuhn et al. gives a higher bound of $\Omega(\log\Delta/\log\log\Delta, \sqrt{\log n/\log\log n})$ ~\cite{KuhnMW16}. The latest results of Balliu et al.~\cite{balliu2019lower} give the new best known lower bounds to find a MIS: There is no deterministic algorithm in $o(\Delta+\frac{\log n}{\log\log n})$ nor randomized algorithm in $o(\Delta+\frac{\log\log n}{\log\log\log n})$. The Figure 1 in~\cite{balliu2019lower} sums up all the results on MIS. A $(3,2)$-ruling set can be computed by computing an MIS over $G^2$, and more general ruling sets have been studied in~\cite{AwerbuchGLP89, SchneiderEW13, PaiPPR017, KuhnMW18, KothapalliP12}.


\subparagraph{Centralized Reconfiguration of Maximal Independent Sets.}
Reconfigurations problems on graphs have been widely studied in the centralized setting during the last decade. An excellent survey on reconfiguration problems can be found in~\cite{Jansurvey}.
In the centralized setting, the transition rules are different, requiring that any intermediate set must be at least of a certain size. While having their own motivation in that setting, these rules are not the ones that are needed in the distributed setting, as they do not give covering guarantees (moreover, such properties would be costly to obtain in a distributed setting, due to their global nature).

In more detail, three kinds of transitions have been studied for the independent set reconfiguration problem.
\emph{Token Addition and Removal}~\cite{ito2010complexity}, or $TAR(k)$: at each transition, one vertex is removed from or added to the current independent set, as long as there are at least $k$ nodes in the independent set.
\emph{Token Jumping}~\cite{kaminski2012complexity}: at each transition, one vertex is removed from the independent set and another one is added somewhere else.
\emph{Token Sliding}~\cite{hearn2002pspace}: at each transition, an edge containing a vertex of the independent set is chosen. This vertex is removed from the set and its neighbor on the other side of that edge is added to the set.
The two first versions are actually equivalent when $k$ corresponds to the size of the independent sets minus 1. Reconfiguration problems are in PSPACE, and independent set reconfiguration problems are in general PSPACE complete~\cite{hearn2002pspace}. Studies over subclasses of graphs exist, and some polynomial algorithm or hardness proofs are given. For example, planar graphs \cite{hearn2002pspace}, perfect graphs \cite{kaminski2012complexity}, trees \cite{demaine2015linear} and bipartite graphs \cite{lokshtanov2018complexity}. 

\section{Preliminaries}
\label{sec:prelim}
We work in the classic LOCAL model of computation, in which $n$ nodes in a synchronous network exchange messages with their neighbors in each round of computation.

Let $G=(V,E,U)$ denote a graph with an assigned subset $U \subseteq V$. An input to the MIS-reconfiguration problem is a pair $G_{input}=(V,E,\alpha),G_{output}=(V,E,\beta)$, where $\alpha$ and $\beta$ are the initial and final maximal independent sets, respectively. We refer to a node $ v \in \alpha$ as an \emph{$\alpha$-node}, and to a node $v \in \beta$ as a \emph{$\beta$-node}.
Notice that a node may be both an  $\alpha$-node and a $\beta$-node.
We refer to node $v \in V\setminus (\alpha\cup\beta)$ as an \emph{$\epsilon$-node}.
Throughout the proofs, we say that a node $v$ is \emph{covered} or \emph{4-dominated} by a node $u$ if $d(v,u) \leq 4$.

For a vertex $v\in V$, we denote by $N(v)$ the set of neighbors of $v$  (i.e., $N(v) = \{u\in V:(u,v)\in E\}$), and given a set $U \subseteq V$ we define $N_U(v)=U\cap N(v)$ for the subset of neighbors of $v$ that are in $U$, and we call this set the $U$-neighbors of $v$. For a subset $U\subseteq Y \subseteq V$ and a node $v \in Y$, we denote by $d_Y(v,U)$ the distance of $v$ from $U$ in the subgraph induced by $Y$.

\begin{definition}[\textbf{Reconfiguration Schedules}]
\label{def:schedule}
For a given property $P$ of $G=(V,E,U)$, an \emph{$(\alpha, \beta, P)$-reconfiguration schedule} (or simply a schedule) $S$ of length $\ell$ is a sequence of subsets of $V$, $S=(S_0,\dots,S_{\ell})$, such that the following hold:
\begin{enumerate}
\item $S_0=\alpha$ and $S_{\ell} = \beta$,
\item for every $0 < i < \ell$, the graph $(V,E,S_i)$ satisfies $P$, and
\item for every $0 < i \leq \ell$, $S_i \oplus S_{i-1}$ is an independent set of $(V,E)$.
\end{enumerate}
\end{definition}


\section{An MIS reconfiguration schedule of constant length}
\label{sec:const-len}
Our main theorem is the following.
\begin{theorem-repeat}{th:main}\textbf{(formal)}
Let $P$ be the property of $(V,E,U)$ that says that $U$ is a (2,4)-ruling set.
For any graph $G=(V,E)$ of diameter greater than 3 and any input $G_{input}=(V,E,\alpha),G_{output}=(V,E,\beta)$ to the MIS-reconfiguration problem, there exists an $(\alpha, \beta, P)$-reconfiguration schedule of constant length $28$.
Moreover, such a schedule can be found in $O(\texttt{MIS}+\texttt{R32})$ rounds, where $\texttt{MIS}$ is the complexity of finding an MIS on a worst-case graph and $\texttt{R32}$ is the complexity of finding a $(3,2)$-ruling set on a worst-case graph.
\end{theorem-repeat}

In particular, Theorem~\ref{th:main} immediately implies a highly efficient solution for bounded degree graphs.
\begin{corollary}
\label{cor:log-star}
The constant length schedule of Theorem~\ref{th:main} can be found in $O(\log^*n)$ rounds in graphs of bounded degree.
\end{corollary}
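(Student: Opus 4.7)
The plan is to instantiate the two abstract complexities $\texttt{MIS}$ and $\texttt{R32}$ from Theorem~\ref{th:main} in the bounded-degree regime and verify that both collapse to $O(\log^\ast n)$, at which point the corollary is immediate.

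First, for $\texttt{MIS}$ in a graph of maximum degree $\Delta=O(1)$, I would invoke the deterministic algorithm of Barenboim, Elkin and Kuhn~\cite{BarenboimEK14}, which runs in $O(\Delta+\log^\ast n)$ rounds; with $\Delta$ bounded this is $O(\log^\ast n)$. This already matches (up to constants) the $\Omega(\log^\ast n)$ lower bound of Linial~\cite{Linial92}, so the bound is tight.

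Next, for $\texttt{R32}$, I would use the observation made in the related-work part of the paper: a $(3,2)$-ruling set of $G$ is exactly an MIS of the square graph $G^2$. If $G$ has maximum degree $\Delta$, then $G^2$ has maximum degree at most $\Delta^2$, which is still a constant when $\Delta=O(1)$. Simulating one round on $G^2$ costs two rounds on $G$ (each node may need to see its $2$-hop neighborhood to communicate with its $G^2$-neighbors), so running the same $O(\Delta^2+\log^\ast n)$ MIS algorithm on $G^2$ yields a $(3,2)$-ruling set of $G$ in $O(\log^\ast n)$ rounds.

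Substituting $\texttt{MIS}=\texttt{R32}=O(\log^\ast n)$ into the round complexity $O(\texttt{MIS}+\texttt{R32})$ of Theorem~\ref{th:main} yields the claimed bound. There is essentially no obstacle here beyond checking that bounded degree is preserved under squaring, which is the only nontrivial (but elementary) point; the rest is a direct plug-in of known constructions.
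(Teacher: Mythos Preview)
Your argument is correct and is exactly the reasoning the paper intends: the corollary is stated without proof as an immediate consequence of Theorem~\ref{th:main}, and the related-work section already records both ingredients you use (the $O(\Delta+\log^\ast n)$ MIS algorithm of~\cite{BarenboimEK14} and the remark that a $(3,2)$-ruling set is an MIS of $G^2$). You have simply made explicit what the paper leaves implicit.
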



We now describe the outline of the algorithm, as follows. Denote by $W$ the set of connected components of $\alpha\cup\beta$. Our main approach is to reconfigure the independent sets according to the components in $W$. To this end, we first categorize each component in $W$ according to its diameter and whether it is isolated or not: We say that a component $V_i \in W$ is \emph{isolated} if for every $\epsilon$-node $u$ in its neighborhood, $N_{\alpha}(u)$ and $N_\beta(u)$ are contained in $V_i$.

Notice that within a constant number of rounds, all $\alpha$ and $\beta$-nodes can know whether they are in a component of diameter 0, 1, 2, or at least 3. Moreover, if their diameter is smaller than 3, they can know whether the component is isolated or not.

To avoid excessive notation, we will sometimes say that \emph{we update the component $V_i$ in steps $\{j, j+1\}$}. This means that we remove $\alpha\cap V_i$ from the independent set in step $j$ and we add $\beta\cap V_i$ to the set in step $j+1$. Formally, this means that $S_j = S_{j-1} \setminus (\alpha\cap V_i)$ and $S_{j+1} = S_{j} \cup (\beta\cap V_i)$. Since we will sometimes update multiple components concurrently, we will have $S_j = S_{j-1} \setminus (\alpha\cap Z_j)$ and $S_{j+1} = S_{j} \cup (\beta\cap Z_j)$, where $Z_j = \bigcup_{i \in I_j} V_i$, with $I_j = \{i : V_i \mbox{ is being updated in steps } \{j, j+1\}\}$.

The high-level description of our algorithm is as follows. First, for components in $W$ of diameter 0, we do not need to do anything, as such components are comprised only of nodes in $\alpha\cap\beta$. These nodes remain in the independent set $S_i$ for the entire schedule, and we omit these components and all of their $\epsilon$-neighbors from the remaining discussion. Our algorithm then handles non-isolated components and components of diameter~$\ge3$, and finally handles the isolated components of diameter~$\le2$.

We begin by claiming that with an overhead of 2 rounds, we may assume that $\alpha$ and $\beta$ are disjoint. 
%

\newcommand{\LemmaAlphaBeta}
{
Let $P$ be the property of $(V,E,U)$ that says that $U$ is a (2,4)-ruling set, and let $G=(V,E,\alpha)$ and $G=(V,E,\beta)$ be an input to the MIS-reconfiguration problem. Let $V_{\alpha,\beta}=\alpha \cap \beta$ and denote $V' = V\setminus (V_{\alpha,\beta}\cup N(V_{\alpha,\beta}))$, $\alpha' = \alpha \setminus V_{\alpha,\beta}$ and $\beta' = \beta \setminus V_{\alpha,\beta}$. Let $G' = G[V']$. If there exists an $(\alpha',\beta',P)$-reconfiguration schedule $S'$ of length $\ell$ for $G'$, then there exists an $(\alpha,\beta,P)$-reconfiguration schedule $S$ of length $\ell$ for $G$.

Moreover, any distributed algorithm for finding $S'$ in $T'$ rounds implies a distributed algorithm finding $S$ in $T=T'+2$ rounds.
}
\begin{lemma}
\label{lemma:alpha-beta-nodes}
\LemmaAlphaBeta
\end{lemma}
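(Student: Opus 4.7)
\medskip

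\noindent\textbf{Proof plan.} The plan is to lift any schedule for the induced subgraph $G'$ to a schedule for $G$ by simply keeping all nodes of $V_{\alpha,\beta}$ in the selected set throughout. Formally, given $S' = (S'_0,\dots,S'_\ell)$ I would define
\[
S_i \;=\; S'_i \;\cup\; V_{\alpha,\beta} \qquad \text{for every } 0 \le i \le \ell.
\]
Since $V_{\alpha,\beta}$ is an independent set (as a subset of the MIS $\alpha$), lies inside both $\alpha$ and $\beta$, and is disjoint from $V'$ (even separated from it by $N(V_{\alpha,\beta})$), this union interacts cleanly with $S'_i$.

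Next I would verify the three conditions of Definition~\ref{def:schedule} for $S$. For the boundary condition, $S_0 = \alpha' \cup V_{\alpha,\beta} = \alpha$ and similarly $S_\ell = \beta$. For the symmetric-difference condition, $V_{\alpha,\beta}$ cancels out, so $S_i \oplus S_{i-1} = S'_i \oplus S'_{i-1}$, which is independent in $G'$ and hence (since $G'$ is an induced subgraph) independent in $G$. The main verification is property $P$, i.e.\ that each $S_i$ is a $(2,4)$-ruling set of $G$. Independence of $S_i$ in $G$ follows because $V_{\alpha,\beta}$ and $S'_i \subseteq V'$ are separated by the fact that $V'$ excludes $N(V_{\alpha,\beta})$, and because $S'_i$ is independent in the induced subgraph $G'$. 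For $4$-domination in $G$, I would split $V$ into three parts: nodes of $V_{\alpha,\beta}$ are trivially in $S_i$; nodes of $N(V_{\alpha,\beta}) \setminus V_{\alpha,\beta}$ are at distance $1$ from $V_{\alpha,\beta} \subseteq S_i$; and nodes of $V'$ are within distance $4$ of $S'_i$ in $G'$, and distances in $G$ can only be shorter than distances in $G'$, so they are still within distance $4$ of $S'_i \subseteq S_i$ in $G$.

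Finally, for the distributed implementation, I would argue that in $2$ rounds every node can decide which of the above three parts it belongs to: each node already knows whether it is in $V_{\alpha,\beta}$ (from its own $\alpha$- and $\beta$-membership), one round of communication reveals whether any neighbor lies in $V_{\alpha,\beta}$ (so each node learns whether it belongs to $N(V_{\alpha,\beta})$), and a second round lets every $v \in V'$ identify which of its $G$-neighbors also lie in $V'$, giving $v$ full knowledge of its neighborhood in $G'$. The nodes of $V'$ then simulate the $T'$-round algorithm on $G'$ using only edges of $G'$ (which are edges of $G$), while nodes of $V_{\alpha,\beta} \cup N(V_{\alpha,\beta})$ output the fixed behavior ``stay in $S_i$'' or ``stay out of $S_i$'' at every step. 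Total complexity is $T'+2$.

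The only subtlety to be careful about is the direction of the distance inequality between $G'$ and $G$: new shortcuts in $G$ (through $V_{\alpha,\beta} \cup N(V_{\alpha,\beta})$) can only shorten distances, so $4$-domination in $G'$ transfers to $G$ for free; if the inequality went the other way, the whole reduction would collapse, so this is the single point that deserves explicit mention.
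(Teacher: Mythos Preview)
Your proposal is correct and follows essentially the same approach as the paper: define $S_i = S'_i \cup V_{\alpha,\beta}$, verify the three conditions of Definition~\ref{def:schedule} using the separation between $V'$ and $V_{\alpha,\beta}$, and note that two rounds suffice to let each node determine its part. Your explicit remark that distances in $G$ can only be shorter than in $G'$ (so $4$-domination transfers) is a point the paper leaves implicit, but otherwise the arguments coincide.
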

\begin{proof}
Let $S' = S'_0,\dots,S'_{\ell}$ be an $(\alpha',\beta',P)$-reconfiguration schedule for $G'$.
We define $S_i = S'_i \cup V_{\alpha,\beta}$ for all  $0 \leq i \leq \ell$, and show that $S = S_0,\dots,S_{\ell}$ is an $(\alpha,\beta,P)$-reconfiguration schedule for $G$. Condition (1) of Definition~\ref{def:schedule} holds since $S_0 = S'_0 \cup V_{\alpha,\beta} = \alpha' \cup V_{\alpha,\beta} = \alpha$ and $S_{\ell} = S'_{\ell} \cup V_{\alpha,\beta} = \beta' \cup V_{\alpha,\beta} = \beta$.

For condition (2), fix an index $0\leq i\leq\ell$. Notice that $S'_i$ is a $(2,4)$-ruling set and hence it is independent. Since $S_i \setminus S'_i = V_{\alpha,\beta}$, and $V_{\alpha,\beta}$ is independent, we have that $S_i$ is also independent because no node that is a neighbor of $V_{\alpha,\beta}$ is in $V'$ and in particular no such node is in $S'_i$. In addition, every node in $V'$ is $4$-dominated in $S'_i$, and every node in $V \setminus V'$ is dominated by $V_{\alpha,\beta}$, which implies that $S_i$ is $4$-dominating and hence it is a $(2,4)$-ruling set.

Finally, for each $0\leq i\leq\ell$, we have that $S_i \oplus S_{i-1} = S'_i \oplus S'_{i-1} $ and hence it is an independent set of $(V,E)$.

To obtain $S$ from $S'$ with an overhead of 2 rounds, a distributed algorithm can have each node in $V_{\alpha,\beta}$ indicate this to its neighbors, and then each node in $N(V_{\alpha,\beta})$ can indicate this to its neighbors. Then, the algorithm obtains $S'$ over $G'$ and deduces $S$.
\end{proof}

\subsection{Components of diameter $\geq 3$}

We continue with the following lemma, which is useful for handling components in $W$ whose diameter is not too small. Roughly speaking, the way we handle components of sufficient diameter is by finding a set of $\alpha$-nodes that are not too close to each other to ensure that $\beta$-nodes can be added not too far from them before we remove them from the independent set. This way, we can reconfigure the rest of the component, and then this set of $\alpha$-nodes and their neighbors. We present the following lemma before the rest of the algorithm because we will need to use it, but notice that it is not the case that we begin the algorithm by reconfiguring components of diameter $\geq 3$.

\begin{lemma}
\label{lemma:no-epsilon-nodes}
Let $P$ be the property of $(V,E,U)$ that says that $U$ is a (2,4)-ruling set, and let $G=(V,E,\alpha)$ and $G=(V,E,\beta)$ be an input to the MIS-reconfiguration problem such that $\alpha\cap\beta=\emptyset$, the set $Y=\alpha\cup\beta$ is a single connected component of diameter at least $3$, and each $\epsilon$-node is connected to an $\alpha$-node and to a $\beta$-node.
Then, there exists an $(\alpha,\beta,P)$-reconfiguration schedule of length $8$. Moreover, such a schedule can be found in $O(\texttt{R32})$ rounds.
\end{lemma}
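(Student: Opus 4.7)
The plan is to shrink $\alpha$ down to a sparse skeleton $R \subseteq \alpha$, gradually swap each $r \in R$ with a designated $\beta$-neighbor $b_r$, and finally expand to $\beta$. First I would compute $R$ in $O(\texttt{R32})$ rounds as a maximal set of $\alpha$-nodes pairwise at $G$-distance $\ge 3$; equivalently, $R$ is a $(3,2)$-ruling set of the auxiliary graph $G^2[\alpha]$, and since $G^2$ is simulated with constant overhead the round complexity matches the claim. By maximality every $\alpha$-node is within $G$-distance $2$ of $R$, and since $\alpha$ is an MIS every $v \in V$ is within $G$-distance $3$ of $R$, so $R$ is already a $(2,4)$-ruling set.

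Next, for each $r \in R$ I would designate a $\beta$-neighbor $b_r$, which exists because $\beta$ is an MIS and $r \notin \beta$. The distance-$3$ spacing of $R$ ensures the $b_r$'s are distinct, that $B := \{b_r : r \in R\}$ is independent, and that $b_r$ is adjacent to no other $R$-node. Set $\beta_0 := \beta \setminus N(R)$ and $\beta_1 := \beta \setminus \beta_0$, and choose a local partition $R = R_1 \sqcup R_2$ (discussed below). The proposed schedule is
\[
S_0 = \alpha,\ S_1 = R,\ S_2 = R \cup \beta_0,\ S_3 = R_2 \cup \beta_0,\ S_4 = S_3 \cup \{b_r : r \in R_1\},
\]
\[
S_5 = \beta_0 \cup \{b_r : r \in R_1\},\ S_6 = \beta_0 \cup B,\ S_7 = \beta.
\]
Each transition $S_i \oplus S_{i-1}$ lies entirely within $\alpha$ or within $\beta$ and is therefore independent; each $S_i$ is itself independent because $\beta_0$ has no edge to $R$, each $b_r$ meets $R$ only at the already-removed $r$, and distance-$3$ spacing of $R$ prevents any conflict among the $b_r$'s. (If the above yields seven transitions rather than eight, a trivial padding step pads it to the claimed length $8$.)

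The hard part will be verifying $4$-domination at the two ``half-swapped'' sets $S_3$ and $S_5$, where a portion of $R$ has been removed but its designated $b_r$'s are not yet in place. At $S_3$, the nodes at risk lie within $G$-distance $2$ of some $r \in R_1$ but more than $4$ from $R_2 \cup \beta_0$. I plan to rule this out by choosing $R_1 \sqcup R_2$ locally (in $O(1)$ rounds after $R$ is known), exploiting the diameter-$\ge 3$ assumption on $Y$ together with the hypothesis that each $\epsilon$-node has both an $\alpha$- and a $\beta$-neighbor: an ``isolated'' $r \in R_1$ whose $2$-neighborhood cannot be covered within distance $4$ by $R_2 \cup \beta_0$ can be re-assigned to $R_2$ without breaking the structure, because the structural assumptions force the existence of a backup either in the other half of $R$ or inside $\beta_0$. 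A symmetric argument handles $S_5$, using the fact that $\{b_r : r \in R_1\}$ lies within $G$-distance $3$ of the $2$-neighborhood of $R_1$, so it inherits the coverage role that $R_1$ used to play. All operations other than the ruling-set computation (choosing $b_r$, forming $\beta_0$ and $B$, partitioning $R$, and executing the schedule) take $O(1)$ rounds, giving the claimed $O(\texttt{R32})$ total bound.
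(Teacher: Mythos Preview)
Your overall shape—shrink $\alpha$ to a sparse skeleton $R$, swap, then expand to $\beta$—is close in spirit to the paper's, but the step you flag as ``the hard part'' is a genuine gap, not a detail. The difficulty is the two-partition $R=R_1\sqcup R_2$: you assert that a suitable partition can be chosen locally so that $S_3=R_2\cup\beta_0$ and $S_5=\beta_0\cup\{b_r:r\in R_1\}$ are both $4$-dominating, but you neither construct it nor prove it exists, and in fact it need not. Take the $12$-cycle with $\alpha=\{0,2,4,6,8,10\}$ and $\beta=\{1,3,5,7,9,11\}$ (so $Y=V$, diameter $6\ge 3$, no $\epsilon$-nodes). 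The set $R=\{0,4,8\}$ is a legitimate output of your ruling-set step (pairwise $G$-distance $4\ge 3$, and maximal). Then $N(R)=\beta$, so $\beta_0=\emptyset$, and for every nontrivial split one of $S_3=R_2$ or $S_5=\{b_r:r\in R_1\}$ is a singleton on a $12$-cycle, hence not $4$-dominating. Your ``re-assignment'' fix cannot help: no partition of this $R$ works, so the argument collapses regardless of how you shuffle nodes between $R_1$ and $R_2$. (Interestingly, the sparser choice $R=\{0,6\}$ would succeed here, but you have no control over which $R$ the black box returns.)

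The paper avoids this by making $R$ much sparser: it takes a $(3,2)$-ruling set in the virtual graph where two $\alpha$-nodes are adjacent iff they share a $\beta$-neighbor, yielding pairwise $Y$-distance $\ge 6$ inside the component. This creates enough layers $R_{-2},\dots,R_5$ around $R$ that one can peel $\alpha$ off and slide $\beta$ in from the outermost layer inward in four remove/add pairs, never needing a two-coloring of $R$. If you want to salvage your outline, the fix is exactly this: sparsify $R$ further and replace the single swap $R\to B$ by a layered wave; the two-partition idea is where it breaks.
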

\begin{proof}
First, assume that the diameter of $Y=\alpha\cup\beta$ is either $3$ or $4$.
Consider a shortest path of length $3$ in $Y$, denoted by $(v_1,v_2,v_3,v_4)$. Either $v_1$ or $v_4$ is in $\alpha$, and is within distance $4$ from all other nodes in the component. We denote this node by $v$, and define $S_1 = \{v\}$ and we have that it 4-dominates the entire component. In addition, it 4-dominates all $\epsilon$-nodes, by the assumption of the lemma that all such nodes are neighbors of $\beta$-nodes, because $v$ actually 3-dominates all $\beta$-nodes in the component. We then denote $R = \{u \in \beta~|~ u \not\in N(v)\}$ and define $S_2 = \{v\} \cup R$, and $S_3 = R$, and finally $S_4 = R \cup N(v)$. It is easy to verify that this results in a valid  $(\alpha,\beta,P)$-reconfiguration schedule. In particular, notice that, without loss of generality, if $v=v_1$, then $R$ contains at least the node $v_4$, which 4-dominates the entire component.

For a diameter of $Y$ that is at least 5, the high-level idea of the construction is as follows. Consider a $(3,2)$-ruling set $R$ over the nodes in $\alpha$, where we imagine an edge between two nodes in $\alpha$ if they are at distance two in the subgraph induced by $Y$. We reconfigure all $\beta$-nodes that are at distance $5$ from $R$ in $G$ by removing their $\alpha$-neighbors first, then by adding them. Then, we do the same for $\beta$-nodes that are at distance $3$ from $R$, and finally we repeat this one last time for the $\beta$-nodes in the direct neighborhood of $R$. The choice of a $(3,2)$-ruling set ensures that all $\alpha$-nodes in $R$ have a $\beta$-node at distance 3 that will be reconfigured in the 4th step.  However, while we trust $\beta$-nodes at distance 3 from $R$ to cover $\alpha$-nodes at distance 2 from $R$ while $R$ itself is being reconfigured, we must be careful when handling $\alpha$-nodes at distance 2 from $R$ that do not have a neighbor at distance 3. We overcome this caveat by taking care of these nodes separately.

Formally, we define a virtual multigraph $\tilde{G}=(\tilde{V},\tilde{E})$ as follows. The set of virtual nodes $\tilde{V}$ consists of all $\alpha$-nodes. If $v$ and $u$ in $\tilde{V}$ have a common $\beta$-neighbor, we add an edge ${u,v}$ to $\tilde{E}$. Let $R$ be a $(3,2)$-ruling set in $\tilde{G}$. It is easy to see that in $G$, the set of nodes $R$ is a $(6,5)$-ruling set of $Y$. We denote $R$ by $R_0$ and we define $R_i$ for $3 \leq i \leq 5$ as $R_i = \{v\in Y ~|~ \mbox{the distance of $v$ from $R$ in the subgraph induced by $Y$ is $i$}\}$. Then we define
$R_{1} = \{v\in Y ~|~ d_Y(v,R)=1 \mbox{ and } d_Y(v,R_3)=2\}\cup\{v\in Y ~|~ N_Y(v)\subseteq R\}$,
which captures all $\beta$-neighbors of $R$ that either do not have other $\alpha$-neighbors, or have other $\alpha$-neighbors which in turn have $\beta$-neighbors that are farther from $R$. We separate those from the set $R_{-1}= N_Y(R_0)\setminus R_1$. We complete the partition by defining $R_2=N_Y(R_1)\setminus R_0$ and $R_{-2}= N_Y(R_{-1})\setminus R_0$. Note that for even $i$, $R_i$ contains only $\alpha$-nodes, and for odd $i$, $R_i$ contains only $\beta$-nodes. We have that, for every $-2\le i\le5$, $N_Y(R_i)\subseteq R_{i-1}\cup R_{i+1}$ (with $R_{-3}=R_6=\emptyset)$. By construction of $R$, we have that each node in $R$ has a node at distance 3 in $R_3$, hence it has a node at distance $2$ in $R_2$ and a node at distance 1 in $R_1$.

%

We define $S_0=\alpha$ and for $i=0,1,2,3$, we define $S_{2i+1} = S_{2i} \setminus R_{4-2i}$, $S_{2(i+1)} = S_{2i+1} \cup R_{5-2i}$.
We claim that $S_0,\dots,S_8$ is an $(\alpha,\beta,P)$-reconfiguration schedule.

First, $S_0=\alpha$ by definition, and because every $\beta$-node is within an odd distance of at most $5$ from $R$ and every $\alpha$-node is within an even distance of at most $4$ from $R$, we have that $S_8=\beta$. This gives condition (1) of Definition~\ref{def:schedule}.

For condition (2), it is easy to see that $S_i \oplus S_{i-1}$ is an independent set of $(V,E)$ for every $1\leq i \leq 8$. For an odd $i$ this holds because to obtain $S_i$ we only remove $\alpha$-nodes from $S_{i-1}$, and no two such nodes can be neighbors. For an even $i$ this holds because to obtain $S_i$ we only add $\beta$-nodes to $S_{i-1}$, and no two such nodes can be neighbors.

It remains to show condition (3) of Definition~\ref{def:schedule}.
To show that $S_{i}$ is independent for $i=2,4,6$, notice that $\beta$-nodes in $R_j$ (for $j=-1,1,3,5$) are only added to the sequence after all $\alpha$-nodes in $R_{k}$ for $k\geq j-1$ have been removed. By definition, $S_0$ is also independent. Hence, for $i=1,3,5,7$, $S_i$ is independent because it is a subset of $S_{i-1}$.

Next, we need to show that $S_{i}$ is $4$-dominating for every $1 \leq i \leq 7$.  Our focus will be for $i=1,3,5,7$, and for $i=2,4,6$ it then follows because $S_i$ contains $S_{i-1}$. We first show it on $Y$, and will prove it for $\epsilon$-nodes afterward. For $i=1$ this holds because all nodes in $R_j$ for $j\leq 3$ are in or have neighbors in $R_{-2}\cup R_0\cup R_2$. All nodes in $R_j$ for $j=4,5$ are within distance $3$ from $R_2$.
Similarly, $S_{3}$ is $4$-dominating because nodes in $R_{j}$ for $j\leq4$ are covered by $R_0$, nodes in $R_{5}$ are in the current independent set.
For $S_{5}$, recall that for any node in $R$, there is a node at distance 3 from it in $R_3$, that node being in the current independent set since $S_4$. Hence, $R_3$ covers $R_{j}$ for $-1\le j\le 5$. $R_{-2}$ is still included in $S_{5}$.
Finally, for $S_{7}$, $R_3$ still covers $R_{j}$ for $-1\le j\le 5$. For each node in $R_{-2}$, there is a node at distance 3 from it in $R_1$ that has been added in $S_6$ that covers it.

Now, let $u$ be an $\epsilon$-node that has a node $a\in\alpha$ and $b\in\beta$ in its neighborhood. We show that $a$ or $b$ are always 3-dominated throughout the sequence. In a step where $b$ has no $\alpha$-neighbor in the independent set, it must be a step right before $b$ gets added to the independent set. If $b$ is in $R_5$ or in $R_3$ then when this happens, it is 3-dominated by an $\alpha$-node in $R_2$ or $R$, respectively, and this node is still in the independent set. If $b\in R_{-1}$ then it is 2-dominated by nodes in $R_{-2}$ and then $R_1$ (with an overlap in $S_6$, the construction ensures that such node exist at distance at most 3 from $b$). Finally, If $b \in R_1$ then either there is a $\beta$-node in $R_3$ that 2-dominates it, and this node is already in the independent set, or $b$ is in $\{v\in Y ~|~ N_Y(v)\subseteq R\}$. Only in the latter case, we must resort to the $\alpha$-neighbor of $u$ and check that it is $3$-dominated by $S_5$, as we removed $R$ from $S_5$ and $b$ is added in $S_6$.

Let $i$ be such that $a\in R_i$. We need to make sure that $a$ is 3-dominated at the step in which we reconfigure $R_1$. At this step, all of the $\beta$-nodes in $R_3, R_5$ are in the independent set, and hence their $\alpha$-neighbors in $R_2,R_4$ are covered by nodes in distance 1, and nodes in $R_0$ are covered by nodes in distance 3. For $\alpha$-nodes in $R_{-2}$ they are still in the independent set at this step, and hence are 3-dominated.

This completes the correctness proof.
For the round complexity, notice that simulating the $(3,2)$-ruling set over $\tilde{G}$ can be done in $G$ with a constant overhead.
\end{proof}

\subsection{Non-isolated components}
We first observe that components of diameter $\le2$ are such that there is a complete bipartite graph between their $\alpha$-nodes and $\beta$-nodes.
Let $u$ be an $\epsilon$-node that is a neighbor of several components. Let $W_u$ be the set of all components that are its neighbors, so that in particular, $V_i, V_j \in W_u$. For each pair of distinct components $V_i, V_j \in W_u$, if there is an $\alpha$ node in $N_\alpha(u)\cap V_i$ and  a $\beta$ node in $N_\beta(u)\cap V_j$, then we say that $V_j$ is $(u,\alpha)$-covered and that $V_i$ is $(u,\beta)$-covered (note that this definition allows a single component to satisfy both conditions). As $u$ is an $\epsilon$-node, there must exist a component $V_{u,\alpha} \in W_u$ that is $(u,\beta)$-covered and a component $V_{u,\beta} \in W_u$ that is $(u,\alpha)$-covered.

We say that a component $V_i \in W$ is $\alpha$-covered ($\beta$-covered) if there is an $\epsilon$-node $u$ for which $V_i$ is $(u,\alpha)$-covered ($(u,\beta)$-covered). A component that is both is 
$\alpha\beta$-covered.

The key insight is that a $(u,\alpha)$-covered component of diameter~$\le2$ is covered (dominated at distance 4) by some $\alpha$-node of the component $V_{u,\beta}$ (and similarly with the $\beta$-node of $V_{u,\alpha}$). Moreover, any $\epsilon$-node that is connected to an $\alpha$-node (a $\beta$-node) in that component is covered by $V_{u,\beta}$ (or $V_{u,\alpha}$). This implies that an $\epsilon$-node that is connected to two components that are updated in different steps is always covered by the component that is currently not being updated.
However, during the reconfiguration schedule, we need to be careful about $\epsilon$-nodes that are connected to a single component, and $\epsilon$-nodes that are connected to two components that are updated at the same time.

We denote by $C_{\alpha\beta}$ the set of $\alpha\beta$-covered components of diameter~$\le2$, and by $C_\alpha$ and $C_\beta$ the sets of $\alpha$-covered and $\beta$-covered components of diameter~$\le2$ that are not in $C_{\alpha\beta}$, respectively.
Define the component graph $\tilde{G} = (W,\tilde{E})$, where there is an edge between $V_i,V_j \in W$ iff there exists an $\epsilon$-node $u$ such that $V_i$ is $(u,\alpha)$-covered and $V_j$ is $(u,\beta)$-covered, or vice-versa.
Notice that in $\tilde{G}$, the sets $C_\alpha$ and $C_\beta$ are two disjoint independent sets.

We are finally ready to formally provide the algorithm for handling all components that are either non-isolated or have diameter $\geq 3$.

\begin{lemma}
\label{lemma:non-isolated}
Let $P$ be the property of $(V,E,U)$ that says that $U$ is a (2,4)-ruling set, and let $G=(V,E,\alpha)$ and $G=(V,E,\beta)$ be an input to the MIS-reconfiguration problem such that $\alpha\cap\beta=\emptyset$, and all connected components of $\alpha\cup\beta$ are either non-isolated or have diameter at least 3. Then, there exists an $(\alpha,\beta,P)$-reconfiguration schedule of length $18$. Moreover, such a schedule can be found in $O(\texttt{MIS}+\texttt{R32})$ rounds.
\end{lemma}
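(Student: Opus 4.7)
The plan is to reconfigure each class of component in a dedicated block, arranging the blocks so that every intermediate set is a $(2,4)$-ruling set. Preprocessing, in $O(\texttt{MIS}+\texttt{R32})$ rounds: every node learns its component's class ($C_{\alpha\beta}$, $C_\alpha$, $C_\beta$, isolated diameter-$\geq 3$, or non-isolated diameter-$\geq 3$) in $O(1)$ rounds; the $(3,2)$-ruling sets needed by Lemma~\ref{lemma:no-epsilon-nodes} are computed in $O(\texttt{R32})$ rounds; and an MIS on an auxiliary graph used to coordinate concurrent $C_{\alpha\beta}$ updates is computed in $O(\texttt{MIS})$ rounds.

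The basic $2$-step update for a diameter-$\leq 2$ component $V_i$ removes $\alpha\cap V_i$ and then adds $\beta\cap V_i$. Independence of $S_j\oplus S_{j-1}$ is automatic, and independence of the intermediate $S_j$ follows from the complete bipartite structure between $\alpha\cap V_i$ and $\beta\cap V_i$: no $\beta$-node of $V_i$ is adjacent to anything still in the set from $V_i$. The crucial check is $4$-domination at the intermediate step. For an $\epsilon$-node $u$ that bridges $V_i$ to a distinct $V_j$ not simultaneously being updated, the key insight noted in the preamble yields $4$-domination through an $\alpha$-node (or a newly-added $\beta$-node) of $V_j$. For an $\epsilon$-node $u'$ whose $\alpha/\beta$-neighbors all lie in $V_i$, I would trace a path $u'\to x\in V_i\to\cdots\to y\in V_i\to u_{\mathrm{bridge}}\to V_j$ and argue its length is at most $4$, exploiting the complete bipartite structure of $V_i$ (any two of its nodes are within distance $2$).

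Given the $2$-step primitive, the schedule is: (I) update all $C_\alpha$ in parallel in $2$ steps, which is safe because $C_\alpha$ is independent in $\tilde G$ so no shared $\epsilon$-bridge is conflict-loaded; (II) update $C_\beta$ in parallel in $2$ steps, symmetric; (III) update $C_{\alpha\beta}$ in a constant number of MIS-selected layers of $2$ steps each, so that no two $\tilde G$-adjacent $C_{\alpha\beta}$ components are updated concurrently; and (IV) apply Lemma~\ref{lemma:no-epsilon-nodes} to all isolated diameter-$\geq 3$ components in parallel for $8$ steps. For non-isolated diameter-$\geq 3$ components, I would first ensure the adjacent diameter-$\leq 2$ components have been reconfigured, so that the external $\epsilon$-bridges are $4$-dominated by the $\beta$-nodes of those neighbors throughout the $8$-step Lemma~\ref{lemma:no-epsilon-nodes} schedule applied internally. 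Summing the block lengths, with a small buffer to sequence the blocks without overlap on bridges, yields a total length of at most $18$.

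The main obstacles are three. First, maintaining $4$-domination for $\epsilon$-nodes connected only to a single updating component, handled by the internal-bipartite distance argument above. Second, ensuring the MIS-based layering of $C_{\alpha\beta}$ fits in a constant number of layers rather than $\Theta(\log n)$; this relies on structural properties of the auxiliary conflict graph so that a single MIS pass (together with its complement) already covers $C_{\alpha\beta}$ in constantly many groups. Third and most delicate, non-isolated diameter-$\geq 3$ components do not satisfy the verbatim hypothesis of Lemma~\ref{lemma:no-epsilon-nodes}, since an $\epsilon$-bridge may have its $\alpha$- or $\beta$-neighbor outside the component; the resolution is to pre-reconfigure the adjacent diameter-$\leq 2$ components so that each such bridge is $4$-dominated by a $\beta$-node of the neighbor component during the entire inner $8$-step schedule. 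Checking $4$-domination step-by-step across these interacting cases, and accounting for the required ordering between blocks, is the heart of the proof.
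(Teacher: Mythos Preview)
Your outline has the right building blocks (the $2$-step ``update'' primitive, the classes $C_\alpha,C_\beta,C_{\alpha\beta}$, an MIS on $C_{\alpha\beta}$, and Lemma~\ref{lemma:no-epsilon-nodes} for the large components), but the block ordering you propose does not work, and this is not a cosmetic issue.

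A component in $C_\beta$ is, by definition, $\beta$-covered but \emph{not} $\alpha$-covered: the only external witness that $4$-dominates it during its update is a $\beta$-node of some neighboring component $V_j$. That $\beta$-node is in the current set only \emph{after} $V_j$ has been updated. In your schedule you process $C_\beta$ in block~(II), right after $C_\alpha$; at that moment $V_j$ may very well lie in $C_{\alpha\beta}$ or be a diameter-$\ge 3$ component, neither of which has been touched yet, so its $\beta$-nodes are not available and the intermediate set fails $4$-domination. The paper avoids this by putting $C_\beta$ \emph{last}. Symmetrically, your placement of the diameter-$\ge 3$ components at the very end conflicts with your stated fix for them (``first ensure the adjacent diameter-$\le 2$ components have been reconfigured''): a $C_\beta$ component may need a $\beta$-node of a diameter-$\ge 3$ component, and a diameter-$\ge 3$ component may need its small neighbors done first, so neither can globally precede the other. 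The paper resolves this by interleaving: $C_\alpha$; then the small components $\alpha$-covered by the MIS $M$; then all diameter-$\ge 3$ components; then $M$; then the remaining $C_{\alpha\beta}$; and only then $C_\beta$.

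The second gap is the $C_{\alpha\beta}$ layering, which you flag as an obstacle but leave to unspecified ``structural properties''. An MIS on the conflict graph does not, on its own, give a constant number of color classes. The paper's trick is different: take one MIS $M$ on $C_{\alpha\beta}$ and use it to split the work into exactly three $2$-step rounds --- components $\alpha$-covered by $M$ (done before $M$, while $M$'s $\alpha$-nodes cover them), then $M$ itself (whose $\beta$-covering neighbors were just updated), then the rest of $C_{\alpha\beta}$ (each adjacent in $\tilde G$ to some member of $M$, now providing a $\beta$-witness). This specific three-way split, together with sandwiching the $8$-step large-component phase in the middle, is what makes the total come out to $2+2+8+2+2+2=18$; your arrangement does not yield a concrete count.
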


\begin{proof}
Our reconfiguration schedule works according to the following \emph{parts}.
\begin{enumerate}
\item Update components of diameter~$\le2$ in $C_\alpha$.
~\\\%Let $M$ be an MIS over all nodes in $C_{\alpha\beta}$.
\item Update components of diameter~$\le2$ that are $\alpha$-covered by a component in $M$.
\item Reconfigure components of diameter~$\ge3$ using the schedule given by Lemma~\ref{lemma:no-epsilon-nodes}.
\item Update components in $M$.
\item Update components of diameter~$\le2$ in $C_{\alpha\beta}$ that were not previously updated.
\item Update components of diameter~$\le2$ in $C_\beta$ that were not previously updated.
\end{enumerate}

First, it is easy to see that the schedule has length 18. The part that reconfigures components of diameter~$\ge3$ requires 8 steps, by Lemma~\ref{lemma:no-epsilon-nodes}. Each of the other 5 parts takes exactly 2 steps as described in the definition of updating components (removing $\alpha$-nodes and then adding $\beta$-nodes), which sums to 18 reconfiguration steps in the schedule.

It remains to prove correctness. First, condition (1) of Definition~\ref{def:schedule} trivially holds, as the schedule reconfigures all nodes. Moreover, by Lemma~\ref{lemma:no-epsilon-nodes} and by the definition of updating a component, it is also immediate that we do not reconfigure two neighbors in a single step, thus the schedule satisfies condition (3) of Definition~\ref{def:schedule}. For condition (2), Lemma~\ref{lemma:no-epsilon-nodes} and the definition of updating a component also guarantee that each $S_i$ is an independent set. The remainder of the proof shows that each $S_i$ in the schedule is also $4$-dominating.

By the order of the reconfiguration steps in the schedule, each component that is being updated is covered by a component that is not concurrently being updated. This also holds for $\epsilon$-nodes that are connected to a component that is not currently being updated. The main condition that must be verified is that $\epsilon$-nodes remain covered even if all of their neighboring components are being concurrently updated in a certain part of the schedule.

Part 1 guarantees that $S_1,S_2$ are 4-dominating because for each component that is being updated, the $\alpha$-node covering it is a member of $S_1,S_2$ and it also covers the required $\epsilon$-nodes that are neighbors of the updated component, as explained earlier.
For part 2, let $u$ be an $\epsilon$-node that is connected to two of the components that are being updated and is not connected to any component that is not being updated. One of the components must be connected to $u$ via an $\alpha$-node. Let $V_i$ be such a component, let $u_1$ be the $\alpha$-node connected to $u$,
 and let $v$ be the $\alpha$-node from a component of $M$ that covers $V_i$. The distance between $v$ and $u_1$ is 3: $v$ is a distance 2 to a $\beta$-node of $V_i$ and, because $V_i$ is of diameter~$\le2$, within $V_i$ all $\beta$-nodes are connected to all $\alpha$-nodes. Hence, $u$ is at distance 4 from $v$.

For part 3, the 4-domination within the components that are being reconfigured is given by Lemma~\ref{lemma:no-epsilon-nodes}. Notice that any $\epsilon$-node connected to a component of diameter~$\ge3$ is connected either to connected components of diameter~$\ge3$ through both an $\alpha$ and a $\beta$-node, or to a component that is not being updated in those steps. In the first case it is covered by Lemma~\ref{lemma:no-epsilon-nodes}, and in the second it is covered by the component not being concurrently updated.

For part 4, notice that all components that are $\beta$-covering components of $M$ have been updated in steps 2 or 3. Hence, as each component of $M$ is in $C_{\alpha\beta}$, there is a $\beta$-node in the current set $S_{12}$ that covers it. As $M$ is independent, we do not have $\epsilon$-node between two components that are being updated.
For part 5, the $\epsilon$-nodes between two components that are being updated are covered by an argument that is symmetric to the one used for part 2.
Finally, for part 6, for each component that is being updated it holds that the $\beta$-node covering it is in a component that has already been updated and hence it is already in $S_{16}$.

Finally, we note that apart from a constant overhead in communication, the number of rounds required for computing the above schedule is proportional to that of finding the MIS $M$ plus solving the diameter $\ge3$ components, which completes in $O(\texttt{MIS}+\texttt{R32})$ rounds, where $\texttt{MIS}$ is the complexity of finding an MIS on a worst-case graph and  and $\texttt{R32}$ is the complexity of finding a $(3,2)$-ruling set on a worst-case graph, as claimed.
\end{proof}

\subsection{Isolated Components}
What remains now is to handle components that are isolated and have diameter at most 2. When we address these components, we will also address all of their $\epsilon$-neighbors. Hence, from this point onwards we will slightly abuse our terminology, and when we refer to such a component we refer to its nodes along with their $\epsilon$-neighbors as the component. This means that now the components that we address might have a diameter that is increased by 1, and thus their diameter can be also 3. Note that the diameter cannot be increased by two as all $\alpha$-nodes are connected to all $\beta$-nodes, and each $\epsilon$-node is connected to an $\alpha$-node and to a $\beta$-node of this component, otherwise the component would not be isolated.

By definition of isolated components, the neighborhood of an $\epsilon$-nodes within such a component, besides containing vertices of the component itself, is only composed of other $\epsilon$-nodes. Moreover, there is at least one additional $\epsilon$-node in this neighborhood, as we consider graphs of diameter at least 4. We distinguish two kinds of isolated components, according to whether their diameter is at most 2, or whether it is 3.

For a component $V_i$ of diameter~$\le2$, suppose $u$ is an $\epsilon$-node that is a neighbor of $V_i$. This node $u$ has an $\alpha$-node and a $\beta$-node in its neighborhood, that both cover the entire component. Therefore, to update such components, it suffices to make sure that a non-$\epsilon$ neighbor of $u$ is in the current independent set during the two reconfiguration steps. By considering connected two of those components that cover each other, we can take an MIS $M$ over those. The schedule of length 4 is: update $M$, and then update the other components.

Assume now that $V_i$ is a component of diameter 3. It holds that there exists an $\epsilon$-node $u$, an $\alpha$-node $a$ and a $\beta$-node $b$ such that $(u,a)\not\in E$ and $(u,b)\not\in E$ (otherwise the diameter would be 2). Here is an informal description of a schedule of 6-steps for this component.
\begin{enumerate}
\item Remove $N_\alpha(u)$. The node $a$ stays in the independent set and covers the entire component.
\item Add $u$ in the set.
\item Remove the remaining $\alpha$-nodes of the component. The node $u$ covers everything.
\item Add $b$ to the set. Note that $b$ covers the component.
\item Remove $u$.
\item Add the remaining $\beta$-nodes of the component.
\end{enumerate}

A caveat is encountered in case there are two such components, $V_1$ and $V_2$, whose selected $\epsilon$-nodes, $u_1$ and $u_2$, are connected. In such case we cannot do the above 6-step schedule in parallel without violating independence. However, observe that if a single of those two $\epsilon$-nodes is added to the set, it actually covers the second component as well, as it has a diameter of 3. As a consequence, taking an MIS over those $\epsilon$-nodes gives us a selection of nodes that cover all the considered components. Hence, consider the schedule above as being for component $V_1$ and denote $u=u_1$, then we can add the following to steps 3 and 4 above:
\begin{itemize}
\item[3.] Remove the remaining $\alpha$-nodes of $V_1$ and all $\alpha$-nodes of $V_2$. The node $u$ covers everything.
\item[4.] Add $b$ and the $\beta$-nodes of $V_2$ to the set. Note that $V_2$ is updated and $b$ covers $V_1$.
\end{itemize}

We now formalize the above intuition in order to prove the following.
\newcommand{\LemmaIsolated}
{
Let $P$ be the property of $(V,E,U)$ that says that $U$ is a (2,4)-ruling set, and let $G=(V,E,\alpha)$ and $G=(V,E,\beta)$ be an input to the MIS-reconfiguration problem such that $\alpha\cap\beta=\emptyset$, and all connected components of $\alpha\cup\beta$ are isolated and have diameter at most 2. Then, there exists an $(\alpha,\beta,P)$-reconfiguration schedule of length $10$. Moreover, such a schedule can be found in $O(\texttt{MIS})$ rounds.
}
\begin{lemma}
\label{lemma:isolated}
\LemmaIsolated
\end{lemma}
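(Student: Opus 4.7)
The plan is to handle the isolated components in two classes according to their \emph{extended diameter}, i.e., the diameter of the component together with its $\epsilon$-neighbors. Call $T_2$ the class of components with extended diameter at most $2$, and $T_3$ the class of components with extended diameter exactly $3$. These two classes partition all components, because the isolation property forces every $\epsilon$-neighbor to have both an $\alpha$-neighbor and a $\beta$-neighbor in the same component (else the independence/maximality of $\alpha$ and $\beta$ would be violated), which caps the extended diameter at $3$. I will handle the two classes with two sub-schedules, of lengths $4$ and $6$, concatenated in order to obtain the claimed length $10$.

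For $T_2$, I form an auxiliary graph $H_2$ whose vertices are the $T_2$-components, with an edge between $V_i$ and $V_j$ whenever they are linked by an $\epsilon$-chain short enough that a surviving $\alpha$- or $\beta$-node of one is at distance at most $4$ from every node of the other. I compute an MIS $M_2$ on $H_2$. The first two steps of the sub-schedule update the components in $M_2$ (remove their $\alpha$-nodes, then add their $\beta$-nodes); the next two steps update the remaining $T_2$-components. Since $M_2$ is maximal, every component $V_j\notin M_2$ is either in $M_2$ or adjacent in $H_2$ to some $V_i\in M_2$, so during the first pair of steps the still-present $\alpha$-nodes of $V_j$ 4-dominate the affected nodes of $V_i$; symmetrically, after the first pair, the components of $M_2$ contribute $\beta$-covers during the second pair.

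For $T_3$, by the extended-diameter-$3$ assumption every $V_i$ admits an $\epsilon$-neighbor $u_i$ and nodes $a_i\in\alpha\cap V_i$, $b_i\in\beta\cap V_i$ with $u_i\not\sim a_i$ and $u_i\not\sim b_i$. I build the graph on these chosen bridge nodes $u_i$ induced by $G$-adjacency and compute an MIS $M_3$. For each $u_i\in M_3$ the component $V_i$ runs the 6-step schedule sketched informally above: remove $N_\alpha(u_i)$, add $u_i$, remove the remaining $\alpha$-nodes of $V_i$, add $b_i$, remove $u_i$, add the remaining $\beta$-nodes of $V_i$. A component $V_j$ whose bridge $u_j$ is not in $M_3$ is piggy-backed onto the schedule of an adjacent $V_i\in M_3$: its $\alpha$-removal folds into step~3 and its $\beta$-addition into step~4. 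Since $u_i$ is in the current independent set during steps~2--4 and $u_j\sim u_i$, the node $u_i$ lies within distance $4$ of every node of the extended component $V_j$ (using $u_j$ as an intermediate vertex and the diameter-$3$ bound), which justifies coverage while $V_j$ is transiently void of $\alpha$- and $\beta$-representatives.

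The main obstacle will be verifying $4$-domination at two delicate moments: the three steps during which an entire $T_3$-component contains no $\alpha$- or $\beta$-node in the set (covered only by the bridge $u_i$), and the boundary $\epsilon$-nodes situated between two components that are being updated concurrently. Both are resolved by combining the isolation property (which confines an $\epsilon$-neighbor's $\alpha$- and $\beta$-neighbors to a single component, bounding extended diameters) with the MIS choice (which guarantees, for every updated component, an adjacent non-updated component still contributing an $\alpha$- or $\beta$-representative within the required distance). Independence of successive symmetric differences is immediate: additions in each step are $\beta$-nodes or the chosen $u_i$'s, removals are $\alpha$-nodes or the chosen $u_i$'s, and the MIS selections prevent neighboring simultaneous changes. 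Finally, the round complexity: both $M_2$ and $M_3$ reduce to MIS on virtual graphs simulable in $G$ with constant overhead, so the whole construction runs in $O(\texttt{MIS})$ rounds plus $O(1)$ rounds of local bookkeeping.
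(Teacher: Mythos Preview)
Your proposal is correct and follows essentially the same approach as the paper: the same partition into extended-diameter $\le 2$ versus $=3$ components, the same $4+6$ step sub-schedules, an MIS over components for the first class and an MIS over the chosen bridge $\epsilon$-nodes for the second, with the piggy-backing of non-MIS $T_3$-components into steps~3--4 exactly as the paper does. The only cosmetic difference is that the paper defines the auxiliary graph on $T_2$-components simply by ``two of their $\epsilon$-nodes are adjacent,'' whereas you phrase the edge relation in terms of the resulting $4$-coverage; these coincide under the isolation and diameter hypotheses.
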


\begin{proof}
We say that two isolated components of diameter~$\le2$ are connected if two of their $\epsilon$-nodes are connected.
Our reconfiguration schedule works according to the following \emph{parts}.
\begin{enumerate}
\item \%Let $M$ be an MIS over isolated components of diameter~$\le2$.
\begin{enumerate}
\item Update the components in $M$.
\item Update the other components of diameter~$\le2$. 
\end{enumerate}
\item \%For each component $V_i$ of diameter 3, we select an $\epsilon$-node $u_i$, an $\alpha$-nodes $a_i$ and a $\beta$-node $b_i$ as described above. That is, $(u_i,a_i)\not\in E$ and $(u_i,b_i)\not\in E$. Let $M'$ be an MIS over the nodes $\{u_i\}_{i\in I}$. Note that if $u_i$ is not in $M'$, it has a neighbor $u_j$ that is in $M'$, and $u_j$ covers $V_i$. Let $S_k$ be the current set of the schedule (we will have that $k=4$).
\begin{enumerate}
\item $S_{k+1} = S_k \setminus \bigcup_{u_i \in M'} N_\alpha(u_i)$. \% Remove all $\alpha$-neighbors of the MIS nodes.
\item $S_{k+2} = S_{k+1} \cup M'$. \%Add all the MIS nodes. 
\item $S_{k+3} = S_{k+2} \setminus (\alpha \cap(\bigcup V_i))$. \%Remove all the $\alpha$-nodes of all the $V_i$.
\item $S_{k+4} = S_{k+3} \cup (\bigcup_{i: u_i \in M'} b_i \bigcup_{j: u_j \not\in M'} (\beta\cap V_j))$. \\\%For each $u_i\in M'$, add $b_i$. For each $u_j\not\in M'$, add $\beta$-nodes of $V_j$.
\item $S_{k+5} = S_{k+4} \setminus M'$. \%Remove $M'$.
\item $S_{k+6} = S_{k+5} \cup (\beta \cap(\bigcup V_i))$.\%Add the remaining $\beta$-nodes.
\end{enumerate}
\end{enumerate}

For the length of the schedule, notice that parts 1a and 1b require two steps each, by the definition of updating a component. Part 2 requires 6 steps, summing to 10 steps in total.

For correctness, condition (1) of Definition~\ref{def:schedule} trivially holds, as the schedule reconfigures all nodes. Moreover, by the definition of updating a component, and by inspection of part 2, it is also immediate that we do not reconfigure two neighbors in a single step, thus the schedule satisfies condition (3) of Definition~\ref{def:schedule}. For condition (2), the definition of updating a component and inspection of part 2 also guarantee that each $S_i$ is an independent set. The remainder of the proof shows that each $S_i$ in the schedule is also $4$-dominating.

For parts 1a and 1b, note that having an MIS over the components promises that whenever a component gets updated it has a neighboring component that stays untouched.

For part 2, the arguments are the following. The sets $S_6, S_8, S_{10}$ are supersets of $S_5, S_7, S_9$, respectively, and therefore we only need to show domination for the latter. In $S_{5}$ we have that all the nodes $a_i$ dominate the components, as they are not connected to the removed nodes $u_i$. In $S_7$ we have that the nodes in $M'$ cover everything. Finally, in $S_9$ we have that the nodes $b_i$ cover all the components.

For the number of communication rounds, it is easy to see that the schedule can be found in $O(\texttt{MIS})$ rounds, where $\texttt{MIS}$ is the complexity of finding an MIS on a worst-case graph, as needed.
\end{proof}

\subsection{Completing the proof}
We can now wrap-up all the ingredients and prove Theorem~\ref{th:main}.
\begin{proof}[\textbf{Proof of Theorem~\ref{th:main}}]
We describe the full $(\alpha,\beta,P)$-reconfiguration schedule $S$.
First, each node $v$ in $V_{\alpha,\beta} = \alpha \cap \beta$ sends a message to its neighbors in $N(v)$ and outputs that it is in $S_i$ for all $0 \leq i \leq 28$. Each node that received such a message, sends a message to its neighbors and outputs that it is not in $S_i$ for all $0 \leq i \leq 28$. By Lemma~\ref{lemma:alpha-beta-nodes}, this is consistent with any reconfiguration schedule for the rest of the nodes. The nodes that produced an output terminate and any edges incident to them are removed from the graph.

Next, all nodes collect their 4-hop neighborhood to decide whether they are in a component of diameter $\geq 3$ or not, and if not then whether they are in an isolated component.

The components of diameter $\geq 3$ and the non-isolated components compute the reconfiguration schedule of 18 steps, as given in Lemma~\ref{lemma:non-isolated}, which we denote by $S'_0,\dots,S'_{18}$. The isolated components of diameter $\leq 3$ compute the reconfiguration schedule of 10 steps, as given in Lemma~\ref{lemma:isolated}, which we denote by $S''_0,\dots,S''_{10}$.

Formally, the $(\alpha,\beta,P)$-reconfiguration schedule is now $S_i = S''_0\cup S'_i \cup V_{\alpha,\beta}$ for $0 \leq i \leq 18$ and $S_i = S''_{i-18} \cup S'_{18} \cup V_{\alpha,\beta}$ for $18 \leq i \leq 28$. It is computed within $O(\texttt{MIS}+\texttt{R32})$ rounds.
\end{proof}

\section{MIS reconfiguration in a constant number of rounds}
\label{sec:const-round}

\begin{theorem-repeat}{th:conscomm}\textbf{(formal)}
Let $P$ be the property of $(V,E,U)$ that says that $U$ is a (2,4)-ruling set.
For any graph $G=(V,E)$ and any input $G_{input}=(V,E,\alpha),G_{output}=(V,E,\beta)$ to the MIS-reconfiguration problem, there exists an $(\alpha, \beta, P)$-reconfiguration schedule of  length $\Theta(f(n))$, where $f(n)$ is the largest identifier among the nodes in the graph, which can be found in $O(1)$ rounds.
\end{theorem-repeat}

To prove this, we first prove the following lemma, stating that we can always reconfigure locally an independent set to add elements from $\beta$ without losing any element in $\alpha\cap\beta$.

\newcommand{\LemmaConstRounds}
{
Let $P$ be the property of $(V,E,U)$ that says that $U$ is a (2,4)-ruling set.
For any graph $G=(V,E)$ of diameter greater than 5, two MIS $\alpha$ and $\beta$ and $v\in\beta\setminus\alpha$, there exists an MIS $\gamma$ such that
\begin{enumerate}
\item $(\alpha\cap\beta)\subset\gamma$ and $v\in\gamma$, and
\item there exists an $(\alpha, \gamma, P)$-reconfiguration schedule of  length 6.
Moreover, for finding the reconfiguration schedule the nodes only need to know the topology of their 5-hop neighborhood and therefore can be found in $O(1)$ rounds.
\end{enumerate}
}

\begin{lemma}
\label{lemma:constant-rounds}
\LemmaConstRounds
\end{lemma}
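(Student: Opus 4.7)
The plan is to construct $\gamma$ as a local modification of $\alpha$ confined to the $5$-hop ball around $v$, and to realize the transition $\alpha\to\gamma$ by three remove-add pairs processed outside-in. I begin by setting $A=N_\alpha(v)$, which is nonempty since $\alpha$ is maximal and $v\notin\alpha$. The core change is to delete $A$ and insert $v$, but this risks leaving nodes of $N(A)\setminus\{v\}$ — and indeed $v$ itself in intermediate steps — uncovered, so I build $\gamma$ as a bounded cascade.

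At Level~$0$ I mark $X_0=A$ for removal and $Y_0=\{v\}$ for addition. For Level~$1$ I scan every node $u\in N(A)\setminus\{v\}$ whose only $\alpha$-neighbours lie in $A$ and which is not adjacent to $v$: if $u\in\beta$ I place $u$ itself in $Y_1$, otherwise I pick a $\beta$-neighbour $b_u$ of $u$ (which exists by maximality of $\beta$) and place $b_u$ in $Y_1$; to ensure $v$ is dominated at the intermediate step where $A$ is already gone but $v$ not yet installed, I also include in $Y_1$ a single $\beta$-neighbour $b_a$ of some $a\in A$. The set $X_1$ is then the $\alpha$-neighbours of $Y_1$ that are not already in $A$. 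Level~$2$ repeats the same fix for any vertex newly uncovered by the removal of $X_1$. I claim the cascade stabilises after Level~$2$ and that everything modified lies inside $B(v,5)$. Moreover no node of $\alpha\cap\beta$ is ever removed: such a node is in $\beta$, so it has no $\beta$-neighbour and is therefore never adjacent to $v$ or to any $\beta$-vertex inserted during the cascade, so the sets $X_i$ miss $\alpha\cap\beta$ entirely.

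The schedule itself is
\[
S_1=\alpha\setminus X_2,\ S_2=S_1\cup Y_2,\ S_3=S_2\setminus X_1,\ S_4=S_3\cup Y_1,\ S_5=S_4\setminus X_0,\ S_6=S_5\cup Y_0=\gamma,
\]
with any empty layer padded by a no-op. Each odd step removes only $\alpha$-vertices and each even step inserts only $\beta$-vertices, so every symmetric difference $S_i\oplus S_{i-1}$ is independent. The outside-in order is crucial: by the time a $\beta$-vertex of $Y_i$ is added, all of its $\alpha$-neighbours — which by construction lie in $X_0\cup\cdots\cup X_{i-1}$ — have already been evicted, so every $S_i$ itself is independent. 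Because the whole construction depends only on the structure of $B(v,5)$, each node computes its role from its $5$-hop view in $O(1)$ rounds; the hypothesis $\operatorname{diam}(G)>5$ guarantees that $B(v,5)$ is a proper subset of $V$, so that $\alpha$ outside the ball — and hence its $4$-dominating contribution — is preserved throughout.

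The main obstacle I expect is verifying $4$-domination at the intermediate sets $S_1,\ldots,S_5$, which reduces to a careful layered case analysis inside $B(v,5)$. For each vertex that loses its $\alpha$-cover at some step I must exhibit a replacement within distance $4$: this is exactly what the outside-in ordering buys, because new covers from layer $Y_i$ arrive before the corresponding inner removals happen, and the auxiliary $b_a\in Y_1$ handles $v$ itself in $S_5$. The delicate point within this is establishing that Level~$2$ already closes the cascade — i.e.\ that no further $\beta$-vertices need be added — which I expect to follow from the maximality of $\beta$ bounding how far a newly uncovered vertex can be from a usable $\beta$-cover already placed by earlier levels, together with the fact that any vertex whose $\alpha$-neighbourhood is untouched by $\bigcup X_i$ retains its original cover.
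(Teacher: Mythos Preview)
Your schedule does not maintain independence. Consider the node $b_a\in Y_1$ that you explicitly insert: it is by definition a $\beta$-neighbour of some $a\in A=X_0$, so $a$ is an $\alpha$-neighbour of $b_a$. By your own construction, the $\alpha$-neighbours of $Y_1$ lie in $X_0\cup X_1$ (indeed you define $X_1$ as ``the $\alpha$-neighbours of $Y_1$ that are not already in $A$''), not in $X_0\cup\cdots\cup X_{i-1}=X_0$ alone as you later claim. In the outside-in order you remove $X_2$, add $Y_2$, remove $X_1$, then add $Y_1$ at step~$4$ --- but $X_0=A$ is still present, so $S_4$ contains both $b_a$ and $a$, which are adjacent. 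The same failure hits every $u\in Y_1$ of the first type: such $u$ has all its $\alpha$-neighbours in $A$, and $A$ has not yet been evicted. Reversing to an inside-out order fixes independence but immediately breaks $4$-domination at $S_1=\alpha\setminus A$, which is precisely the step you introduced $b_a$ to protect; so the two requirements pull in opposite directions and the layered scheme as written cannot satisfy both.

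Beyond this, two further claims are left as hopes rather than arguments: that the cascade closes at Level~$2$, and that the resulting $\gamma$ is a \emph{maximal} independent set (you never complete it). The paper's proof avoids all of this by a different mechanism: it performs a case analysis on the $2$-neighbourhood of $v$ and, crucially, is willing to temporarily add a \emph{non}-$\beta$ vertex (an $\epsilon$-node at distance $2$, or a node at distance $3$) as a short-lived anchor that $4$-dominates the region while $N_\alpha(v)$ is removed and $v$ is inserted, after which the anchor is removed and the original $\alpha$-nodes around it are restored. That ``borrow a helper, then give it back'' pattern is what lets the paper keep the modification genuinely local and terminate in six steps without a cascade; restricting yourself to permanent $\beta$-insertions is what forces the cascade and creates the ordering conflict above.
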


\begin{proof}[\textbf{Proof of Lemma~\ref{lemma:constant-rounds}}]
Note that because the diameter is greater than 5, there are always nodes at distance 3 from any node. Our schedule will always be an alternation of removing nodes, and adding independent nodes to the set, which ensures property (2) of the definition.
Each time we want to reconfigure a node $v$ to be added to the independent set, we first remove its neighbors in $N_\alpha(u)$, which ensures the independence part of the property (2). To be able to do so, we always ensure that neighbors at distance 2 from $v$ are covered (4-dominated), as those include the neighbors of $N_\alpha(v)$, which permits to satisfy the covering property of (2). For an independent set $X$, we define $GC(X)$ as the greedily completion of $X$ to be an MIS. Note that if the longest distance between two nodes that are added is constant, this can be done in a constant number of steps. Note that we will give priority to $\beta$-nodes and then $\alpha$-nodes during the completion.

First, consider the case where there exists an $\alpha$-node $u$ at distance 2 from $v$. Notice that $u$ covers all the nodes at distance $2$ from $v$. We can remove $N_\alpha(v)$ from the independent set, then add $v$ to it, and finally complete the current independent set greedily to be maximal. Formally, we define $\gamma=GC(\{v\}\cup\alpha\setminus N_\alpha(v))$ and we have $S_0=\alpha$, $S_1=S_0\setminus N_\alpha(v)$, $S_2=S_3=S_4=S_5=S_6=\gamma$.

From now on, we consider the case that only $\beta$-nodes and $\epsilon$-nodes are at distance 2 from $v$.
We differentiate several cases depending on the neighborhood of nodes at distance 2 from $v$:
\begin{enumerate}
\item There exists a node $u$ at distance 2 from $v$ such that $N(u)\cap N_\alpha(v)=\emptyset$:

Let $u_1$ be a node in $N(v)\cap N(u)$. Then $u_1$ is an $\epsilon$-node, as it cannot be an $\alpha$-node by the assumption, nor a $\beta$-node (as $v$ is a $\beta$-node). The node $u_1$ must have a neighbor in $N_\alpha(v)$, as the nodes in $N_\alpha(u_1)$ cannot be in distance 2 from $v$.
That $\alpha$-node covers $u$ and the nodes at distance 2 from $u$. Hence, we can remove all nodes in $N_\alpha(u)$: $S_1=\alpha\setminus N_\alpha(u)$. We can then add $u$ to the independent set: $S_2=S_1\cup\{u\}$. Now, $u$ covers  $N_\alpha(v)$ and its neighborhood, so we can remove those nodes from the independent set: $S_3=S_2\setminus N_\alpha(v)$. Next, $v$ can be added to the independent set and then $u$ and its neighbors are covered by $v$: $S_4=S_3\cup\{v\}$. Hence, we can then remove $u$ from the independent set, and finally add $N_\alpha(u)$ back to the independent set: $S_5=S_4\setminus\{u\}$, $S_6=\gamma=GC(S_5\cup N_\alpha(u))$.

\item There exists a node $u$ at distance 2 from $v$ such that $N_\alpha(v)\subseteq N(u)$ and $u$ has a neighbor $u_3$ that is at distance 3 from $v$:

Notice that $u_3$ covers all nodes in $N_\alpha(v)$ and their neighbors. Symmetrically, all nodes in $N_\alpha(v)$ cover $u_3$ and its neighbors at distance 2.
First we remove all nodes in $N_\alpha(u_3)$ and then add $u_3$ to the independent set (we do nothing if $u_3$ itself is in $\alpha$): $S_1=\alpha\setminus N_\alpha(u_3)$, $S_2=S_1\cup\{u_3\}$. We then remove $N_\alpha(v)$ from the independent set, and then add $v$: $S_3=S_2\setminus N_\alpha(v)$, $S_4=S_3\cup\{u\}$. Finally, we can put $u_3$ and its neighborhood back in their initial states:  $S_5=S_4\setminus\{u\}$, $S_6=\gamma=GC(S_5\cup N_\alpha(u))$.

\item There exists a node $u$ at distance 2 from $v$  that has no neighbor at distance 3 from $v$, and there exists a node $a$ in $N_\alpha(v)$ that is not in the neighborhood of $u$:

We can remove $N_\alpha(u)$, as $a$ covers those nodes and their neighborhood, and we can then add $u$: $S_1=\alpha\setminus N_\alpha(u)$, $S_2=S_1\cup\{u\}$. Now, we can remove what remains of $N_\alpha(v)$, and then add $v$: $S_3=S_2\setminus N_\alpha(v)$, $S_4=S_3\cup\{v\}$. Finally, we remove $u$: $S_5=S_4\setminus\{u\}$, $S_6=\gamma=GC(S_5)$.

\item Now we are in a situation where all nodes at distance 2 from $v$ that have no neighbor at distance 3 from $v$ have $N_\alpha(v)$ in their neighborhood. Let's call $U_2$ the set of nodes at distance 2 from $v$ with a neighbor at distance 3 from $v$. For each $u\in U_2$, we call $u_3$ one of its neighbor which is at distance 3 from $v$. We know that $u$ has at least one node in $N_\alpha(v)$ in its neighborhood, and we will call it  $a_u$. The node $a_u$ covers $u_3$ and its neighborhood at distance 2. We can remove $N_\alpha(u_3)$ and add $u_3$ for all $u\in U_2$: $S_1=\alpha\setminus\bigcup\limits_{u\in U_2}N_\alpha(u_3)$, $S_2=S_1\bigcup\limits_{u\in U_2}\{u_3\}$.

We need now to prove that each vertex in $N_\alpha(v)$ and its neighborhood is covered by some $u_3$. It is true for $N_\alpha(v)$ as each $u_3$ is at distance 3 from $v$. It is also true for $v$. It remains to show this for the nodes at distance 2 from $v$. For those who have a neighbor at distance 3, they are in $U_2$ and have by construction a $u_3$ in their direct neighborhood. For the others, these nodes have $N_\alpha(v)$  in their neighborhood, otherwise we would be in case 3. Hence, they are at distance at most 4 from any $u_3$.

We can remove $N_\alpha(v)$  from the independent set, and we can then add $v$: $S_3=S_2\setminus N_\alpha(v)$, $S_4=S_3\cup\{u\}$. We then remove all the $u_3$ nodes and put back their neighborhood:  $S_5=S_4\setminus\bigcup\limits_{u\in U_2}\{u_3\}$, $S_6=\gamma=GC\left(S_5\bigcup\limits_{u\in U_2} N_\alpha(u_3)\right)$.
\end{enumerate}

Note that in this process, the only nodes actually removed from the independent set are those in $N_\alpha(v)$, are we have put back all the other ones. As $v\in\beta$, it means that $N_\alpha(v)\cap\beta=\emptyset$. Hence, we did not remove any node from $\alpha\cap\beta$.
\end{proof}

Lemma~\ref{lemma:constant-rounds}, 
means that for any element $v$ in $\beta$, we can add $v$ to the current MIS in a constant number of steps without losing any element of $\beta$ already in the MIS. It allows us to prove Theorem~\ref{th:conscomm} as follows.

\begin{proof}[\textbf{Proof of Theorem~\ref{th:conscomm}}]
Nodes use their identifiers to know when to start their own reconfiguration. A node with identifier $k$ uses slots $[6k+1, 6(k+1)]$ for its schedule.
Since a node only needs to know its 5-hop neighborhood, this completes in $O(1)$ rounds.
\end{proof}

If the identifiers are guaranteed to be $\{1,\dots, n\}$ then Theorem~\ref{th:conscomm} gives that a constant number of rounds is sufficient for a linear length schedule. However, we can do even better by using coloring algorithms, as stated in the following corollary.
\begin{corollary}
\label{cor:coloring}
Let $P$ be the property of $(V,E,U)$ that says that $U$ is a (2,4)-ruling set.
For any graph $G=(V,E)$ and any input $G_{input}=(V,E,\alpha),G_{output}=(V,E,\beta)$ to the MIS-reconfiguration problem, if the nodes are given a $k$-coloring of $G^{10}$, then there exists an $(\alpha, \beta, P)$-reconfiguration schedule of  length $O(k)$, which can be found in $O(1)$ rounds.
\end{corollary}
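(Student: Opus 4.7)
The plan is to mimic the identifier-based construction of Theorem~\ref{th:conscomm} but parallelize over a color class instead of serializing by ID. First I would apply Lemma~\ref{lemma:alpha-beta-nodes} to reduce to the case $\alpha\cap\beta=\emptyset$, at a constant overhead in rounds. If the diameter of $G$ is at most $5$, each node can collect the entire graph in $O(1)$ rounds and compute any valid schedule locally; moreover any proper coloring of $G^{10}$ then forces $k=\Omega(n)$, so the length bound $O(k)$ is trivially met. I may therefore assume the diameter exceeds $5$, which is exactly the regime in which Lemma~\ref{lemma:constant-rounds} applies.

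For each color $c\in\{1,\dots,k\}$, let $V_c=\{v\in\beta\setminus\alpha:\text{color}(v)=c\}$. The schedule proceeds in $k$ phases of six steps each; in phase $c$, every $v\in V_c$ simultaneously runs the $6$-step insertion schedule of Lemma~\ref{lemma:constant-rounds} that adds $v$ to the current MIS. This yields a schedule of total length $6k=O(k)$, and since Lemma~\ref{lemma:constant-rounds} guarantees that each single-vertex insertion is determined by the $5$-hop topology around $v$, the entire schedule is produced in $O(1)$ rounds.

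The main obstacle, which is really the heart of the proof, is to verify that the parallel insertions inside a single color class do not interfere. Here I would exploit that a proper coloring of $G^{10}$ places any two distinct same-color vertices $v,v'\in V_c$ at distance at least $11$ in $G$, so their $5$-hop balls are pairwise disjoint. An inspection of the four cases of Lemma~\ref{lemma:constant-rounds} shows that each intermediate set $S_1,\dots,S_5$ differs from the incoming MIS only on vertices within distance $5$ of $v$ — namely $N_\alpha(v)$, $\{v\}$, and possibly a single auxiliary vertex ($u$ or $u_3$) together with its $\alpha$-neighbors. Hence the toggles for distinct $v,v'\in V_c$ act on disjoint vertex sets; the restriction of each intermediate set to any $5$-ball coincides with the corresponding step of Lemma~\ref{lemma:constant-rounds} applied to that unique $v$, while every other vertex is unchanged during phase $c$. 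Independence of the symmetric differences $S_i\oplus S_{i-1}$ and the $(2,4)$-ruling property then lift directly from the single-vertex guarantees to the phase-wide composition.

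Finally, to conclude that the final set equals $\beta$, I would use the observation from the end of the proof of Lemma~\ref{lemma:constant-rounds} that the only vertices ever removed during an insertion at $v$ belong to $N_\alpha(v)$, which is disjoint from $\beta$. Thus no $\beta$-vertex that has been inserted in an earlier phase is ever removed, so after the $k$-th phase every vertex of $\beta$ lies in the running MIS; since $\beta$ is itself maximal independent, the running MIS must coincide with $\beta$, completing the schedule at length $6k=O(k)$ in $O(1)$ rounds.
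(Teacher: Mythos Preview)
Your proposal is correct and follows exactly the approach the paper intends: the paper does not spell out a proof of Corollary~\ref{cor:coloring}, but the surrounding text and the proof of Theorem~\ref{th:conscomm} make clear that the intended argument is to replace the identifier-based time-slotting by color-based time-slotting, running the $6$-step insertion of Lemma~\ref{lemma:constant-rounds} in parallel for all $\beta$-nodes of the current color. Your observation that a proper coloring of $G^{10}$ forces same-color vertices to have disjoint $5$-balls, together with the locality of the modifications in Lemma~\ref{lemma:constant-rounds}, is precisely the non-interference argument the paper leaves implicit.
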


\section{A complete characterization for the existence of a reconfiguration schedule with 4-domination}
\label{sec:characterization}

The following gives an exact characterization of inputs for which there exists a reconfiguration schedule with 4-domination.
In what follows, we say that two sets of nodes $U_1$ and $U_2$ are \emph{fully connected} if every node in $U_1$ is a neighbor of every node in $U_2$. If $U_1$ contains only a single node, then we simply say that this node is fully connected to $U_2$. 

\newcommand{\ThmChar}
{
Let $P$ be the property of $(V,E,U)$ that says that $U$ is a (2,4)-ruling set.
For any input $G_{input}=(V,E,\alpha),G_{output}=(V,E,\beta)$ to the MIS-reconfiguration problem, there does not exists an $(\alpha, \beta, P)$-reconfiguration schedule if and only if:
\begin{enumerate}
\item The sets $\alpha$ and $\beta$ are fully connected.
\item
Let $\epsilon_\alpha$ (resp. $\epsilon_\beta$) be the set of $\epsilon$-nodes that are fully connected to $\alpha$ (resp. $\beta$). Then all the $\epsilon$-nodes are in $\epsilon_\alpha\cup\epsilon_\beta$.
\item Let $G'=(V'=\epsilon_\alpha\cup\epsilon_\beta,E'=\overline{E_{V'}}) $, where $\overline{E_{V'}}$ is the complementary of $E$ restricted to vertices of $V'$. Then there is no path from $\epsilon_\alpha\setminus\epsilon_\beta$ to $\epsilon_\beta\setminus\epsilon_\alpha$ in $G'$.
\end{enumerate}
}
\begin{theorem}
\label{thm:characterization}
\ThmChar
\end{theorem}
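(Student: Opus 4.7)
The plan is to prove both directions of the biconditional separately. The forward direction (conditions 1--3 imply no schedule) is the structural heart of the theorem; the reverse direction (failure of any condition implies a schedule) will be handled by reducing to Theorem~\ref{th:main} where possible and by explicit construction in the residual small-diameter cases.

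For the forward direction, assume all three conditions hold and suppose for contradiction that a schedule $S_0=\alpha, S_1,\ldots, S_\ell=\beta$ exists. The case $\alpha = \beta$ is vacuous; condition~1 then forces $\alpha \cap \beta = \emptyset$ (else a common node $v$ would have to be adjacent to every other node in $\beta$, contradicting its independence). Let $j$ be the largest index with $S_j \cap \alpha \neq \emptyset$ and $k$ the smallest with $S_k \cap \beta \neq \emptyset$. Using independence of the symmetric difference $S_{j+1} \oplus S_j$ together with the full connectivity of $\alpha$ and $\beta$, any $\beta$-node or $\epsilon_\alpha$-node in $S_{j+1}$ would be adjacent to an $\alpha$-node removed at step $j+1$, so $S_{j+1} \subseteq \epsilon_\beta \setminus \epsilon_\alpha$; symmetrically $S_{k-1} \subseteq \epsilon_\alpha \setminus \epsilon_\beta$. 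Since the empty set is not $4$-dominating, both sets are nonempty.

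The decisive step uses condition~3. Partition $V' = \epsilon_\alpha \cup \epsilon_\beta$ into the connected components $C_1,\ldots,C_t$ of $G'$. By construction, any pair of vertices in different components is adjacent in $G$, so any independent set $S \subseteq V'$ lies entirely in one component, and a transition $S_m \to S_{m+1}$ between nonempty $\epsilon$-only sets cannot jump components without placing an adjacent pair in the symmetric difference. Therefore every intermediate $S_{j+1},\ldots,S_{k-1}$ lies in one fixed component $C$. But $S_{j+1}$ forces $C \cap (\epsilon_\beta \setminus \epsilon_\alpha) \neq \emptyset$ and $S_{k-1}$ forces $C \cap (\epsilon_\alpha \setminus \epsilon_\beta) \neq \emptyset$, yielding a $G'$-path between the two sides and contradicting condition~3.

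For the reverse direction, first observe that when all three conditions hold the graph has diameter at most~$2$: $\alpha$--$\beta$ pairs are adjacent, $\alpha$--$\alpha$ and $\beta$--$\beta$ pairs are at distance~$2$ via the complete bipartite structure, $\epsilon$-nodes are at distance~$1$ from $\alpha \cup \beta$ by condition~2, and condition~3 forces $\epsilon_\alpha \setminus \epsilon_\beta$ and $\epsilon_\beta \setminus \epsilon_\alpha$ to be directly adjacent. Consequently, any $G$ with diameter greater than~$3$ already fails one of the conditions, so Theorem~\ref{th:main} constructs a schedule. The remaining graphs have diameter at most~$3$, and when a condition fails I build a schedule explicitly: if condition~1 fails, a non-adjacent pair $a \in \alpha, b \in \beta$ serves as a bridge allowing $a$ to remain as a cover while $\beta$-nodes are activated (and vice-versa); if condition~2 fails, an $\epsilon$-node with non-neighbors in both $\alpha$ and $\beta$ plays the same role; if condition~3 fails, a $G'$-path from $\epsilon_\alpha \setminus \epsilon_\beta$ to $\epsilon_\beta \setminus \epsilon_\alpha$ is traversed by activating consecutive path vertices, with the non-$G$-edges along the path ensuring independence of each swap.

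The main obstacle will be verifying $4$-domination at each intermediate step of the bridge constructions in the diameter-$\le3$ regime. Independence is essentially free from the non-adjacencies supplied by the failing condition, but one must check, case by case, that every vertex remains within distance~$4$ of the current independent set throughout the slide; here the small diameter of the graph is the key lever, since any bridge vertex is at small distance from all of $\alpha \cup \beta$ and therefore covers most of the graph by itself when placed in the set.
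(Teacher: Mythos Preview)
Your proposal is correct and follows essentially the same line as the paper's proof: for the direction ``conditions $\Rightarrow$ no schedule'' you locate the last $\alpha$-containing set and the first $\beta$-containing set, show the intermediate sets are $\epsilon$-only, and use the $G'$-component structure to contradict condition~(3); for the converse you invoke Theorem~\ref{th:main} when the diameter exceeds~$3$ and otherwise build explicit bridge schedules from the failing condition, exactly as the paper does. One small technical point to tighten: your index $k$ should be the smallest index \emph{greater than $j$} with $S_k\cap\beta\neq\emptyset$ (this is what the paper does), since with your definition as ``smallest overall'' one cannot rule out $k<j$ a priori and the range $S_{j+1},\dots,S_{k-1}$ would then be ill-formed---with that adjustment the argument goes through verbatim.
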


\begin{proof}
\textbf{If there is no schedule then Conditions (1), (2), and (3) hold:}
Let  $G=(V,E,\alpha)$ and $G=(V,E,\beta)$ be such that there is no $(\alpha,\beta,P)$-reconfiguration schedule. First, since Theorem~\ref{th:main} produces a schedule when the diameter is greater than 3, we know that the diameter of $G$ is bounded by $3$. This implies the powerful property that it is sufficient for a single node to be part of the set, for any intermediate independent set, because such a node clearly $4$-dominates the rest of the graph.

Since there is no reconfiguration schedule, for any sequence of independent sets that are $4$-dominating, there is a node in $\beta$ that is not added to the set. But we can say something stronger, which is that for \emph{any} node $v\in\beta$, no schedule adds $v$ to the independent set. The reason is that if there is a schedule that adds $v\in\beta$, then since it 4-dominates the entire graph, in the next step in the sequence we could remove all other nodes from the set, and complete it in the next step with all other $\beta$-nodes, which contradicts that assumption that there is no reconfiguration schedule.

Now, let $v\in\beta$, and consider the categories in the proof of Lemma~\ref{lemma:constant-rounds}, which shows how to add every node in $\beta$ to the independent set. Since the proof produces a schedule if there exists any node at distance 3 from $v$ or a node in $\alpha$ at distance 2, this implies that $v$ is fully connected to $\alpha$. Since this holds for any $v\in\beta$, we have that $\alpha$ and $\beta$ are fully connected, proving Condition (1).

Suppose that there exists an $\epsilon$-node $e$ that is not connected to some $a\in\alpha$ and some $b\in\beta$. Then $S_1=\{a\}$, $S_2=\{a,e\}$, $S_3=\{e\}$, $S_4=\{e,b\}$, $S_5\{b\}$ and $S_6=\beta$ is an $(\alpha,\beta,P)$-reconfiguration schedule, proving Condition (2).

By contradiction, we prove Condition (3). Suppose that there exists $T=(e_0,e_1,\ldots,e_k)$ such that $e_0\in\epsilon_\beta\setminus\epsilon_\alpha$, $e_k\in\epsilon_\alpha\setminus\epsilon_\beta$ and $\forall i<k, (e_i,e_{i+1})\not\in E$. The crux here is that we can have both $e_i$ and $e_{i+1}$ in an intermediate set. This means that every two steps, we can add an element $e_i$ of the path $T$, after removing $e_{i-2}$. Note that, as  $e_0\in\epsilon_\beta\setminus\epsilon_\alpha$ (resp. $e_k\in\epsilon_\alpha\setminus\epsilon_\beta$), there exists $a\in\alpha$ (resp. $b\in\beta$) such that $e_0$ and $a$ (resp. $e_k$ and $b$) are not connected. More formally, we have the following $(\alpha,\beta,P)$-reconfiguration schedule:
$S_1=\{a\}$, $S_2=\{a,e_0\}$, $\forall i<k, S_{2i+3}=\{e_{i}\}$, $S_{2i+4}=\{e_i,e_{i+1}\}$, $S_{2k+3}=\{e_k\}$, $S_{2k+4}=\{e_k,b\}$, $S_{2k+5}\{b\}$ and $S_{2k+6}=\beta$.

\textbf{If Conditions (1), (2), and (3) hold then there is no schedule:} Let $G=(V,E,\alpha)$ and $G=(V,E,\beta)$ be such that Conditions (1), (2), and (3) hold. Suppose that there exists an $(\alpha,\beta,P)$-reconfiguration schedule $(S_i)_{i\le N}$ of length $N$. For any $0\leq i\leq N-1$, if $S_i$ contains nodes from $\alpha$, then the only nodes that can be added in $S_{i+1}$ are nodes in $\epsilon_\beta\setminus\epsilon_\alpha$ (the other nodes are fully connected to $\alpha$, by Conditions (1) and (2)). Similarly, for any $0\leq i\leq N-1$, if $S_{i+1}$ has a node in $\beta$, then $S_i$ has only nodes from $\epsilon_\alpha\setminus\epsilon_\beta$.

Let $i_1=\max\{i~|~ S_i\cap\alpha\neq\emptyset\}$ and $i_2=\min\{i ~|~ i> i_2 \text{ and }S_i\cap\beta\neq\emptyset\}$. Since  $\alpha$ and $\beta$ are disjoint, because both sets are independent and fully connected to each other, it holds that $i_1$ and $i_2$ are well defined, and that $i_1<i_2$.

Now, $S_{i_1+1}$ only contains nodes in $\epsilon_\beta\setminus\epsilon_\alpha$, as $S_{i_1}\cap S_{i_1+1}$ must be independent. Similarly, $S_{i_2-1}$ only contains nodes in $\epsilon_\alpha\setminus\epsilon_\beta$. By definition of $i_1$ and $i_2$, for all $i_1 < i < i_2$, it holds that $S_i\in V'$. In adidtion, for all $i_1 < i < i_2$, each node in $S_{i+1}$ is connected to each node in $S_i$ in $G'$, as those nodes must form an independent set in $G$. Hence, by induction, the nodes in $\bigcup\limits_{i_1 < i < i_2}{S_i}$ form a connected component in $G'$. But this contradicts Condition (3).
\end{proof}

\section{Impossibility results for MIS reconfiguration}
\label{sec:impossibilities}
\vspace{-0.25cm}
We show here two types of impossibility results. One is the necessity of $4$-domination in the sense that $3$-domination cannot be obtained, and the other is the necessity of $\Omega(\log^*{n})$ rounds with $4$-domination, even on bounded degree graphs where it matches the complexity we provide in Corollary~\ref{cor:log-star}.

\subparagraph{Impossibility of  MIS reconfiguration with 3-domination}
\begin{theorem-repeat}{prop:3dom}
\PropThreeDom
\end{theorem-repeat}

\begin{figure}[h]
\begin{center}
\includegraphics[scale=0.75, trim= 0cm 10.7cm 0 3.7cm, clip]{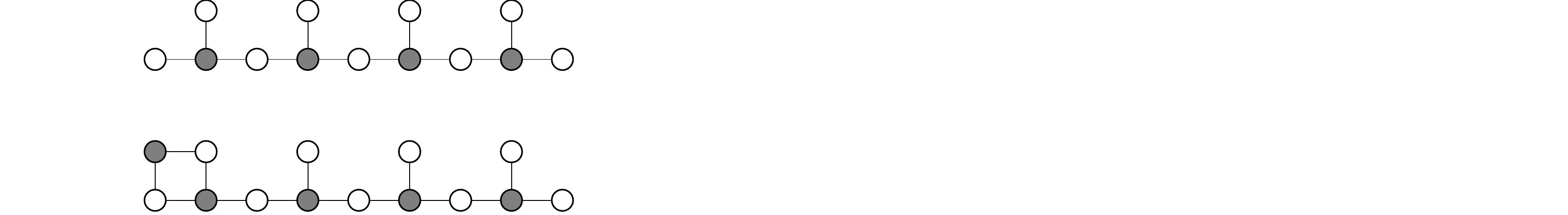}
\end{center}
\caption{White nodes are $\alpha$-nodes, and grey nodes are $\beta$-nodes. For the graph on the top, there is no schedule with 3-dominating sets. For the graph on the bottom, any schedule with 3-dominating sets must be of linear length and requires a linear number of rounds to be found.}
\label{fig:3dom}
\end{figure}

\begin{proof}[\textbf{Proof of Theorem~\ref{prop:3dom}}]\
For (1), Figure \ref{fig:3dom} (top) is an example. Consider the first $\beta$-node (grey) we try to reconfigure. This requires that all of its $\alpha$-neighbors (white) are removed from the set. But then the $\alpha$-neighbor (white) above this  $\beta$-node is not $3$-dominated by any $\alpha$-node.
For (2), Figure \ref{fig:3dom} (bottom) is an example. According to the previous argument, here we have that the top left $\beta$-node (grey) must be reconfigured first. This is possible by first removing its $\alpha$-neighbors, and then we can reconfigure the rest of the $\beta$-nodes one after the other following the path by first removing each one's $\alpha$-neighbors. This is a linear schedule which clearly requires a linear number of rounds to deduce. No shorter reconfiguration schedule is possible, by the argument for the previous item.
\end{proof}

\subparagraph{An $\Omega(\log^*n{})$  lower bound for MIS reconfiguration with 4-domination}

\begin{theorem-repeat}{prop:log-star}
\Proplogstar
\end{theorem-repeat}


\begin{figure}[h]
\begin{center}
\includegraphics[scale=0.75, trim= -1cm 14.2cm 0 5cm, clip]{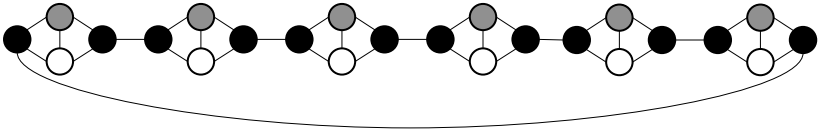}
\end{center}
\caption{White nodes are $\alpha$-nodes, grey nodes are $\beta$-nodes, and black nodes are $\epsilon$-nodes. Here, $\Omega(\log^*{n})$ rounds are needed with 4-domination.}
\label{fig:3reg}
\end{figure}

\begin{proof}[\textbf{Proof of Theorem~\ref{prop:log-star}}]
Figure \ref{fig:3reg} is an example for $k=3$. For $k\ge4$, replace the pairs of white-grey nodes by a clique of size $k-1$ containing one white and one grey node.
It is not possible to add at the same time three consecutive white nodes in the independent set, as this violates the domination.

Now, suppose we have an algorithm that finds a constant reconfiguration schedule in $o(\log^*n)$ rounds. We prove that it permits computing an MIS on a path in $o(\log^*n)$ rounds, which would be a contradiction to Linial's celebrated lower bound~\cite{Linial92}. Given the input path, we create a virtual graph where each node is replaced by two $\epsilon$-nodes connected to a clique of size $k-1$ containing an $\alpha$-node, a $\beta$-node and $k-3$ $\epsilon$-nodes (we transform the path into an instance of our counter-example). We compute a constant schedule of length $K$ in $o(\log^*n)$ rounds. Each initial node chooses the first time its $\beta$-node joined the independent set as its color in a $K$-coloring. The coloring is not necessarily proper, but it has the property that three consecutive nodes cannot have the same color. Also, as we have $K$ colors, there must be a local minimum within distance $2K$ from any node, which implies that within $O(K)=O(1)$ rounds we can obtain an MIS out of this coloring. This gives an MIS algorithm on paths in $o(\log^*n)$ rounds, which is a contradiction.
\end{proof}

\begin{corollary}
For any fixed $k$ and $d$, there exists a class of $k$-regular inputs $G=(V,E)$ with two MIS $\alpha$ and $\beta$ such that any reconfiguration schedule of constant length with $d$-domination needs $\Theta(\log^*n)$ rounds to be found.
\end{corollary}
\begin{proof}
The proof is the same as above with the same graph. The only difference is that, instead of not accepting 3 consecutive nodes to be added the same time, we now have $\left\lfloor\frac{2d+1}3\right\rfloor$ nodes.
\end{proof}

%
%
%
%
%
\section{Discussion and Open Questions}
\label{sec:discussion}
This paper defines relevant constraints for finding a reconfiguration schedule of maximal independent sets in a distributed setting. For constant-length schedules in bounded-degree graphs we completely settle the required complexity, as we provide an algorithm completing in $\Theta(\log^*n)$ communication rounds, and prove that no lower complexity exists. A main open question that remains is: Can a better complexity be found for general graphs?

Our definition only uses addition and removal of elements to the intermediate independent sets. We propose the following question: Can an efficient distributed reconfiguration schedule be found if the system allows that intermediate steps are 3-dominating and the transitions used can be any combination of addition, removal and Token Sliding?


Finally, we used as a hypothesis that the given independent sets are maximal. Our algorithm still works when the sets are not maximal, as it suffices to complete those. For example, if we are given (2,4)-ruling sets (which is equivalent to the 4-domination condition of $P$), the problem is solved. An interesting question could be to generalize for other $(a,b)$-ruling sets. What relation between $a$ and $b$ is needed to ensure that a schedule exists, and that it can be found efficiently with a distributed algorithm?


~\\\textbf{Acknowledgements:} The authors thank Alkida Balliu, Michal Dory, Seri Khoury, Dennis Olivetti, and Jukka Suomela for helpful discussions.\\
This project has received funding from the European Union's Horizon 2020 Research And Innovation Programe under grant agreement no.~755839 and it was supported in part by the Academy of Finland, Grant 285721.

\bibliographystyle{plain}
\bibliography{bib}
\end{document}